\newtheorem{theorem}{Theorem}
\newtheorem{assumption}{Assumption}
\newtheorem{remark}{Remark}
\newtheorem{lemma}{Lemma}
\newtheorem{proposition}{Proposition}
\begin{document}

\title{Lightweight Federated Learning in Mobile Edge Computing with Statistical and Device Heterogeneity Awareness}
\author{Jinghong Tan, ~Zhichen Zhang, ~Kun Guo*,~\IEEEmembership{Member,~IEEE}, ~Tsung-Hui Chang,~\IEEEmembership{Fellow,~IEEE}, ~Tony Q. S. Quek,~\IEEEmembership{Fellow,~IEEE}
\thanks{This work was supported in part by the National Research Foundation, Singapore and Infocomm Media Development Authority under its Future Communications Research \& Development Programme, in part by Shenzhen Science and Technology Program under Grant ZDSYS20230626091302006 and 
RCJC20210609104448114, in part by Guangdong Provincial Key Laboratory of Big Data Computing, in part by the National Natural Science Foundation of China, under Grant 62301222 and 62571191, and in part by Applied Basic Research Foundation of Yunnan Province under Grants 202301AT070198.} 
\thanks{Jinghong Tan and Zhichen Zhang are with the National Pilot School of Software, Yunnan University, Kunming 650500, China. (E-mails:  jinghong\_tansutd@hotmail.com; zhangzhichen@stu.ynu.edu.cn).}
\thanks{K. Guo is with the Shanghai Key Laboratory of Multidimensional Information Processing, School of Communications and Electronics Engineering, East China Normal University, Shanghai 200241, China. (email: kguo@cee.ecnu.edu.cn).}
\thanks{Tsung-Hui Chang is with the School of Artificial Intelligence, The Chinese University of Hong Kong, Shenzhen 518172, China, and also with the
Shenzhen Research Institute of Big Data, Shenzhen 518172, China (e-mail:tsunghui.chang@ieee.org).}
\thanks{Tony Q. S. Quek is with the Information Systems Technology and Design Pillar, Singapore University of Technology and Design, Singapore 487372, and also with the Yonsei Frontier Laboratory, Yonsei University, Seoul 03722, South Korea (e-mail: tonyquek@sutd.edu.sg).}
\thanks{\emph{Corresponding author: Kun Guo (e-mail:kguo@cee.ecnu.edu.cn)}.}
}

\maketitle

\begin{abstract}
Federated learning enables collaborative machine learning while preserving data privacy, but high communication and computation costs, exacerbated by statistical and device heterogeneity, limit its practicality in mobile edge computing. 
Existing compression methods like sparsification and pruning reduce per-round costs but may increase training rounds and thus the total training cost, especially under heterogeneous environments.
We propose a lightweight personalized FL framework built on parameter decoupling, which separates the model into shared and private subspaces, enabling us to uniquely apply gradient sparsification to the shared component and model pruning to the private one. 
This structural separation confines communication compression to global knowledge exchange and computation reduction to local personalization, protecting personalization quality while adapting to heterogeneous client resources. We theoretically analyze convergence under the combined effects of sparsification and pruning, revealing a sparsity-pruning trade-off that links to the iteration complexity. Guided by this analysis, we formulate a joint optimization that selects per-client sparsity and pruning rates and wireless bandwidth to reduce end-to-end training time. Simulation results demonstrate faster convergence and substantial reductions in overall communication and computation costs with negligible accuracy loss, validating the benefits of coordinated and resource-aware personalization in resource-constrained heterogeneous environments.

\end{abstract}

\begin{IEEEkeywords}
Personalized federated learning, gradient sparsification, model pruning, parameter decoupling, mobile edge computing.
\end{IEEEkeywords}

\section{Introduction}
\subsection{Background and Motivations}

\IEEEPARstart{A}{s} the Internet of Things and mobile edge computing (MEC) rapidly develop, massive amounts of data are generated at the network edge. Due to privacy constraints, centralized training is often impractical for such edge data. Federated learning (FL) \cite{pmlr-v54-mcmahan17a}, in which a central server coordinates local training across multiple clients while only exchanging model-related information, reconciles collaborative learning with data privacy and is thus regarded as a key technology for enabling distributed intelligence in edge scenarios such as intelligent transportation, mobile health, and smart cities.

Deploying FL in real-world MEC environments, however, faces two mutually interacting challenges. On the one hand, resources are constrained: mobile devices have limited uplink bandwidth, and energy \cite{9084352,tan2021robust}. On the other hand, clients are heterogeneous: user behavior and environments produce highly non-independent and identically distributed (non-IID) data distributions \cite{10304290,9609568}, while devices differ significantly in communication and computation capabilities. These dual constraints form the basic conditions that any personalized FL deployment in MEC must confront. Optimizing only one dimension is generally insufficient in practice, because in real scenarios, these two factors typically coexist and interact.

More importantly, these coexisting challenges exhibit a mutually reinforcing compound degradation effect, so that algorithm design is far more difficult than the simple superposition of single-issue solutions. Resource limitations restrict the feasibility of compensation strategies for data heterogeneity, for example, more sophisticated aggregation or personalization mechanisms, causing gradient conflict and model bias to accumulate during training. Simultaneously, beyond the inherent challenges of slower training and increased costs caused by data and device heterogeneity, their combined effect with resource limitations can also bias the global model towards resource-rich clients, thereby amplifying the degradation of generalization and fairness. Therefore, resource limits and the heterogeneity of data and devices  interact in complex ways in FL, and studying either factor in isolation cannot fully reveal the true performance frontier of the system. Together they form the core challenge for achieving efficient FL at the edge.

In resource-constrained and heterogeneous edge scenarios, prior work has reduced communication or computation overhead via quantization \cite{8917724,pmlr-v108-reisizadeh20a,9916128,10228970}, sparsification \cite{9488839,8889996,10.1145/3485730.3485929}, and pruning\footnote{Pruning can be classified by timing into static and dynamic pruning. Static pruning generally occurs post-training to accelerate inference, while dynamic pruning happens during training to speed up the training. In this paper, pruning generally refers to dynamic pruning.} \cite{9691274,9598845,9545941}, and numerous studies have proposed personalization strategies to mitigate the performance degradation caused by non-IID data \cite{9076082,9449957,NEURIPS2020_24389bfe,NEURIPS2020_f4f1f13c,li2019fedmd,NEURIPS2020_18df51b9,tan2024longterm}. However, a systematic gap remains in the literature: most methods either treat personalization merely as a tool to improve accuracy while overlooking its structural value for coordinated resource optimization, or treat compression techniques as single-round accelerators while ignoring their effects on total training rounds. Consequently, simultaneously addressing data and device heterogeneity with low time and resource costs remains largely unexplored.

Among various personalization strategies, parameter decoupling \cite{9743558,arivazhagan2019federated,liang2020think} splits the global FL model into a globally shared part and a locally private part. This approach mitigates statistical heterogeneity without introducing significant additional overhead, but it does not, by itself, alleviate problems caused by device heterogeneity. Motivated by this, in this paper, we elevate parameter decoupling to a structural mechanism for joint optimization of resource allocation, pruning, and sparsification, and we systematically characterize how the errors induced by sparsification and pruning couple and affect convergence and resource consumption. This combination is not a simple stacking of techniques. To address the complex challenges posed by resource-constrained MEC systems, we perform a systematic, structured design that integrates parameter decoupling, gradient sparsification, and model pruning.

Concretely, by exploiting the structural properties of parameter decoupling, we strictly confine gradient sparsification to the shared subspace, so that it only affects the efficiency of transmitting global knowledge; and confine model pruning to the personalization subspace, so that it only affects local computational efficiency. This function-level orthogonalization  protects both personalization performance and the consistency of global knowledge when adapting pruning and sparsification to heterogeneous device resources. Built on this unique structure, we reveal and model a previously under-explored “dual-coupling” relationship. First, gradient sparsification and model pruning,  albeit acting on different subspaces, become mutually constraining in the presence of limited system resources. Second, the training errors introduced by sparsification and pruning jointly affect the total number of iterations required for convergence. Therefore, the choices of pruning rate and sparsity rate directly determine the core trade-off between convergence speed and resource consumption — a topic that has been insufficiently explored in prior work.

To reveal the “dual-coupling” relationship, 
we provide an in-depth analysis of both the theoretical and system-level behavior of this proposed framework. First, from a convergence perspective, we incorporate the errors induced by gradient sparsification and model pruning into the parameter-decoupled global convergence analysis and derive an upper bound that explicitly contains sparsity and pruning rate terms, thereby quantifying their joint effect on iteration complexity and final error. Second, from a systems perspective, under heterogeneous communication and compute constraints, we formulate a joint optimization problem that aims to accelerate convergence. This problem treats each client’s sparsity rate, pruning rate, and wireless bandwidth allocation as optimization variables, and we design efficient algorithms to obtain high-quality solutions. Extensive simulations show that, under representative MEC heterogeneity, our method substantially reduces the total computation–communication cost and speeds up convergence while maintaining or closely matching baseline accuracy. These results confirm that the proposed joint optimization approach can find system-level resource–performance trade-offs under multi-dimensional constraints including data, compute, and communication heterogeneity. 

Based on the above, our main contributions are summarized as follows:
\begin{itemize}
    \item 
    We propose a novel lightweight personalized FL training method that elevates parameter decoupling from a personalization tool to a structural mechanism enabling function-level orthogonalization. In this design, gradient sparsification and model pruning are strictly confined to the shared and personalization subspaces, respectively. This mechanism preserves personalization performance and global knowledge consistency while reducing communication and computation costs and adapting to heterogeneous device resources.
    
    \item 
    Building on this structure, we identify and model a previously under-explored dual coupling between gradient sparsification and model pruning. Specifically, we show that the two techniques become mutually constraining in the presence of limited system resources. Furthermore, we prove that the training errors they introduce jointly affect the convergence efficiency, and we derive a new upper bound that explicitly quantifies their combined effect. These insights lead us to formulate a joint optimization problem of resource allocation, pruning, and sparsification, which achieves lightweight acceleration of convergence under heterogeneous constraints.  
    
    \item
    Extensive simulations reveal the joint impact of pruning and sparsification on the trade-off between convergence efficiency and resource consumption. The results further confirm that our method reduces total computation–communication cost and accelerates convergence with negligible accuracy loss, validating the effectiveness of our joint optimization approach in resource-constrained and heterogeneous MEC environments.  
\end{itemize}

\subsection{Related Works}
\subsubsection{Compression techniques for reducing communication and computation costs}

To address the high communication and computation costs in federated learning, existing studies primarily adopted model compression techniques (e.g., quantization, sparsification, and pruning) to reduce per-round communication or computation overhead. Among these methods, sparsification is a common and effective means of communication compression, allowing clients to upload only parts of the parameters or gradients. Compared with directly sparsifying weights, sparsifying gradients typically had less impact on training, and therefore gradient sparsification attracted more attention for reducing communication volume. Prior work reduced uplink data traffic via Top-k, random sampling and similar strategies, and often combined sparsification with quantization for further compression \cite{aji-heafield-2017-sparse, NEURIPS2018_b440509a,8889996,9488839}. For example, the work in \cite{9488839} proposed an adaptive sparsity control scheme based on Top-k to balance communication and computation energy among resource-heterogeneous clients, but it did not consider the effect of statistical heterogeneity; the work in \cite{8889996} combined gradient sparsification with ternary quantization in non-IID settings and achieved improved performance, yet it did not systematically relate sparsity design to device heterogeneity. Studies in sparsification ignored the heterogeneity either in device or data.
Moreover, since the information loss introduced by sparsification might have increased the number of rounds required for convergence \mbox{\cite{9488839,jiang2022model}}, existing studies in sparsification had limited improvement on the overall cost reduction and convergence acceleration, especially in resource- and device-heterogeneous MEC environments.

A different sparsification approach was to learn a sparse mask while keeping the model parameters unchanged during local training. FedMask \cite{10.1145/3485730.3485929} let clients learn binary sparse masks to derive personalized subnetworks from an initial model. Because only masks were communicated, this method had certain adaptability to heterogeneous communication conditions while addressing data heterogeneity; however, since the model’s structure and parameters were not adapted by each client’s data, its ability to adapt to compute heterogeneity was limited, so that the scale of the model was limited by the bottleneck device. More importantly, the quality of the initial model substantially affected the performance of the derived subnetworks.

Another common compression technique was model pruning, which accelerated training and inference by removing redundant weights. Unlike sparsification that often acted on gradients, pruning directly altered model structure and was therefore commonly used to reduce local computation and model size. In the federated learning context, some works attempted to jointly optimize pruning rates, device selection, and resource allocation to accommodate heterogeneous devices \cite{9598845}, but such studies often relied on relatively basic pruning strategies and did not fully account for statistical heterogeneity. To mitigate issues caused by non-IID data, the work in \cite{9545941} proposed hybrid structured and unstructured pruning strategies that identified client-appropriate subnetworks from the global model to reduce per-round communication and computation; this approach improved over baselines on non-IID data but did not explicitly address device heterogeneity, and the aggregation relying on overlapping parameters might have weakened aggregation signals under extreme pruning without rigorous convergence guarantees.

In summary, research on sparsification and pruning has made significant progress in reducing per-round communication or computation overhead, but there remains a clear gap in simultaneously accounting for statistical heterogeneity and device heterogeneity and in evaluating their combined impact on convergence rounds and total training cost.

\subsubsection{Personalized Federated Learning}

When client data are highly heterogeneous, using a single global model for all clients often results in severe gradient conflicts. Personalized federated learning alleviates this problem by allowing different clients to train models with differences, so that globally shared knowledge and client-specific information can be learned concurrently.

To this end, most personalization methods incurred additional learning overhead. For example, transfer learning or meta-learning based approaches \cite{9076082,9449957,NEURIPS2020_24389bfe,NEURIPS2020_f4f1f13c} typically alleviated performance degradation due to non-IID data but required extra local personalization steps beyond global training, which added computational cost. Knowledge-distillation based methods \mbox{\cite{li2019fedmd,NEURIPS2020_18df51b9,Shen_2022_CVPR}} avoided data privacy leakage by using intermediate activation transmission, pseudo-public data, or shared generators; such approaches either increased computation (e.g., training a generator or performing iterative teacher-student distillation locally) or incurred additional communication overhead (e.g., transmitting activations to serve as a virtual public dataset), and thus were limited in privacy- or resource-constrained settings.

Another personalization method, called parameter decoupling, was widely adopted because it could mitigate statistical heterogeneity without substantially increasing communication or computation burdens. A typical design partitioned the model into a “shared base + local personalization module”: e.g., the base + personalization architecture in \cite{arivazhagan2019federated} used shallow layers as a shared base to learn general feature extractors while keeping deeper layers private to capture individual differences; the work in \cite{liang2020think} proposed another layer decoupling that enabled shallow layers to learn local representation alongside global aggregation in deep layers. Parameter decoupling reduced communication by transmitting only a subset of parameters, and the private parameters that were not aggregated enhanced personalization. In federated fine-tuning, FLoRA \cite{Wang_2024_NeurIPS} allowed clients to insert local LoRA adapters of different ranks into the shared foundation model and proposed a stacking aggregation scheme based on matrix operations to achieve correct aggregation under rank heterogeneity, thereby balancing rank heterogeneity and aggregation accuracy; however, FLoRA required uploading adapters for all layers each round, which still incurred high per-round communication overhead in wireless or bandwidth-limited environments. PF2LoRA \cite{Hao_2025_PF2LoRA} trained two LoRA adapter branches, one for global sharing and one for local personalization, in two stages, and employed bilevel optimization to automatically select the rank of local adapters, improving adaptability to data heterogeneity; yet the two-stage design significantly increased per-round computation overhead. Besides, because the global shared adapter structure was consistent across clients, it remained difficult to fully adapt to different wireless conditions for different devices.

Overall, parameter decoupling clearly benefited personalization performance, but most studies viewed it from the perspective of accuracy or personalization and rarely explored its structural value for reducing training costs and adapting to heterogeneous devices.

\subsubsection{Summary}
To sum up, although many compression techniques reduce per-round communication or computation costs, existing low-cost federated learning solutions generally overlook the combined impact of data heterogeneity and device heterogeneity on convergence speed and total training cost; meanwhile, most personalization studies, while improving performance, often introduce extra resource consumption and fail to effectively address communication and computation bottlenecks. More importantly, current research seldom approaches the problem from a system perspective by treating parameter decoupling as a tool to jointly address both data and device heterogeneity with low cost, thereby limiting the overall applicability of federated learning in edge environments.

The rest of this paper is organized as follows. Section II introduces our system model. Section III provides the convergence analysis of the proposed algorithm and formulates the problem accordingly. Section IV discusses several possible future directions. Section V provides the simulation results of our experiments. Section VI concludes the paper.

\section{System Model}

\begin{table}[ht]
\caption{Main Notations}
\centering
\begin{tabular}{c|c}
    \toprule
    Notation & Definition \\
    \midrule
    $\mathcal{N}$ & set of clients \\
    $n, N$ & client index, total number of clients \\
    $\mathcal{D}_n,D_n$ & local dataset, size of local dataset \\
    $\mathcal{X}_n,\mathcal{Y}_n$ & training samples and labels \\
    $\mathbf{x}_{n,i},y_{n,i}$ & single sample and label\\
    $\mathbf{w}_{n,t}$ & client model\\
    $\mathbf{w}_{t}^{\texttt{B}},\mathbf{w}_{n,t}^{\texttt{P}}$ & base and pers. layers\tablefootnote{The ``pers. layers" is the abbreviation for ``personalization layers".} model \\
    $d,d^{\texttt{B}},d^{\texttt{P}}$ & parameter dimensions \\
    $F_n\left( \cdot \right),f\left( \cdot \right)$ & empirical risk, loss function \\
    $\gamma_n$ & ratio of client's training data \\
    $\mathbf{g}_{n,t}$ & stochastic gradients \\
    $\mathbf{g}_{n,t}^{\texttt{B}},\mathbf{g}_{n,t}^{\texttt{P}}$ &stochastic gradients of base and pers. layers  \\
    $\eta$ & learning rate \\
    $\widehat{\mathbf{w}}_{n,t}$ & pruned client model \\
    $\widehat{\mathbf{w}}_{n,t}^{\texttt{P}}$ & pruned pers. layers \\
    $\mathbf{k}_{n,t},\mathbf{r}_{n,t}$ & sparsification strategy, pruning strategy \\
    $k_{n,t},r_{n,t}$ & sparsification rate, pruning rate \\
    $\widehat{\mathbf{g}}_{n,t}$ & pruned stochastic gradients \\
    $\widehat{\mathbf{g}}_{n,t}^{\texttt{B}},\widehat{\mathbf{g}}_{n,t}^{\texttt{P}}$ & pruned stochastic gradients of base and pers. layers \\
    $\widetilde{\mathbf{g}}_{n,t}^{\texttt{B}}$ & gradients of base layer after pruning and sparsification \\
    $T$ & number of training rounds \\
    $p_{n,t}$ & client's transmit power \\
    $h_{n,t}$ & client's channel gain \\
    $l_{n,t}$ & wireless bandwidth allocation \\
    $W$ & total bandwidth of the wireless channel \\
    $N_0$ & power spectral density of AWGN \\
    $FPP$ & floating-point precision \\
    $R_{n,t}$ & client's uplink transmission rate \\
    $S_{n,t}$ & number of bits transmitted by client \\
    $\omega_{n,t}$ & client’s CPU frequency \\
    $\zeta_n$ & energy consumption coefficient of the device's CPU \\
    $C$ & CPU cycles required to process a single sample \\
    $b_n$ & batch size \\
    $L_1,L_2$ & Lipschitz constants \\
    \bottomrule
\end{tabular}
\label{tab:table1}
\end{table}

We consider a MEC scenario where a central server is deployed at a base station (BS), and multiple mobile devices (MDs), such as smartphones, vehicles, and sensors, are regarded as clients. The server coordinates $N$ clients to collaboratively train a machine learning model, such as a convolutional neural network (CNN). Every client $n\in \mathcal{N} := \{ 1,2,\ldots,N\}$ has a local dataset $\mathcal{D}_n = \{\mathcal{X}_n,\mathcal{Y}_n\}$ with $D_{n}$ samples, where $\mathcal{X}_n = \{\mathbf{x}_{n,1},\ldots,\mathbf{x}_{n,i},\ldots,\mathbf{x}_{n,D_n}\}$ represents the training samples, and $\mathcal{Y}_n = \{y_{n,1},\ldots,y_{n,i},\ldots,y_{n,D_n}\}$ represents the sample labels. These datasets are non-IID. During FL training, all clients exchange model parameters with the server via wireless channels. Clients' local computing capabilities and communication resources are typically limited and differ across various devices and training rounds. 
%The system model is shown in Fig.~\ref{fig}.

In the proposed federated learning framework, each client first prunes its personalized layers to reduce local computation. The client then calculates gradients of the whole model and updates its personalized parameters. Concurrently, the gradients corresponding to the base layers are sparsified and uploaded to the server. The server aggregates the received gradients to update the global base-layer parameters and subsequently broadcasts the updated shared parameters to all clients. Upon receipt, each client replaces its local base-layer parameters with the broadcasted updates and proceeds to the next training round. This workflow is depicted in Fig.~\ref{fig}.

\begin{figure*}[ht]
\centering
\includegraphics[width=1.0\textwidth]{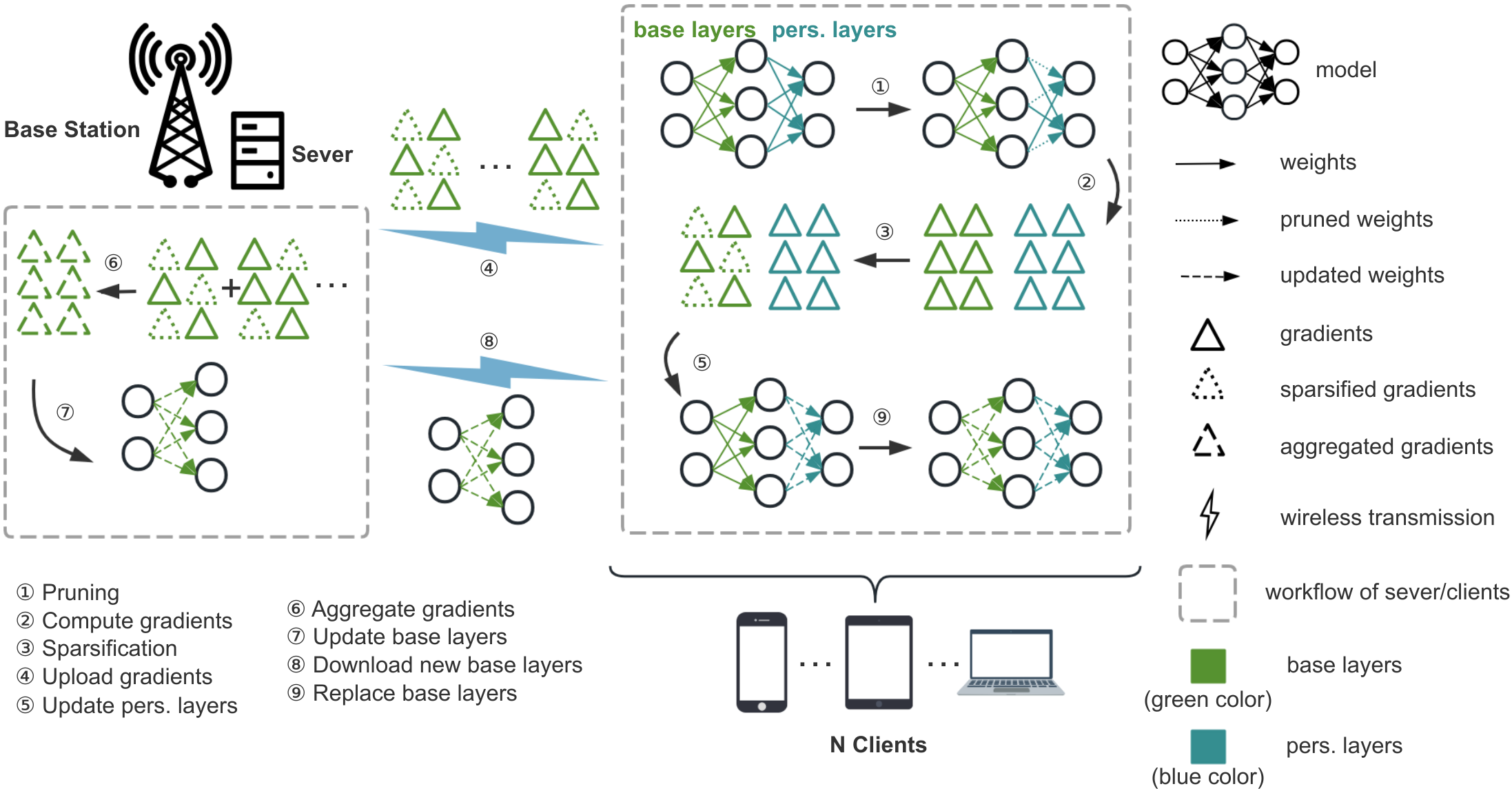}
\caption{The workflow of our proposed approach.}
\label{fig}
\end{figure*}

\subsection{Personalized Federated Learning Model}
Inspired by the concept of parameter decoupling, we divide the client model into base layers and personalization layers to address the statistical heterogeneity in client datasets and achieve model personalization. Although this paper focuses on the lightweight strategy in parameter decoupling-based FL training approach \cite{arivazhagan2019federated}, it is also applicable to similar PFL algorithms that are divided into global and local parts. 

In parameter decoupling-based FL training, with the server's assistance, client $n$ trains a personalized model consisting of base and personalization layers. Without loss of generality, the model weights are represented as a vector $\mathbf{w}_n \in \mathbb{R}^d$, as the work in \cite{9488839,8889996,9691274}. Each client trains its $\mathbf{w}_n$ locally. After parameter decoupling, the shallow layers of the model weights serve as the base layers, while the deep layers serve as the personalization layers. To represent the two parts, $\mathbf{w}_n$ is expanded as $\mathbf{w}_n = \left\lbrack \mathbf{w}^{\texttt{B}}; \mathbf{w}_{n}^{\texttt{P}} \right\rbrack^\texttt{T}$, where $\mathbf{w}^{\texttt{B}} \in \mathbb{R}^{d^{\texttt{B}}}$ denotes the base layers, which are shared by all clients, while $\mathbf{w}_{n}^{\texttt{P}} \in \mathbb{R}^{d^{\texttt{P}}}$ represents the personalization layers, which are retained locally without aggregation or sharing, with $d = d^{\texttt{B}} + d^{\texttt{P}}$. The training objective is formulated as
%\begin{equation}
$\min_{\mathbf{w}_{1},\ldots,\mathbf{w}_N}{\sum_{n = 1}^{N}\gamma_n{F_n\left( \mathbf{w}_n\right)}},
$
%\end{equation}
where $\gamma_n = D_n/\sum_{n = 1}^{N}D_n$ represents the ratio of training samples held by client $n$ relative to the total number of samples of all clients, and $F_n\left( \mathbf{w}_n \right)$ denotes the empirical risk for client $n$ as
\begin{equation}
F_n\left( \mathbf{w}_n\right) := \frac{1}{D_n}\sum_{i \in \mathcal{D}_n}{f\left( \mathbf{w}_n,\mathbf{x}_{n,i},y_{n,i} \right)},
\end{equation}
where $f\left( \cdot \right)$ is the loss function, and $i$ is the sample index.

In the $t$-th training round, the model is trained using stochastic gradient descent (SGD) with a subset of the samples. Let $\mathcal{D}_{n,t}$ represent a mini-batch subset randomly sampled from the local dataset $\mathcal{D}_n$ during the $t$-th training round, and the batch size is $b_n$. The stochastic gradients for client $n$ at the $t$-th training round are then expressed as
\begin{equation}
 g_n\left( \mathbf{w}_{n,t}\right) = \frac{1}{b_{n}}\sum_{i \in \mathcal{D}_{n,t}}{\nabla f\left( \mathbf{w}_{n,t},\mathbf{x}_{n,i},y_{n,i} \right)}.
\end{equation}
with $\mathbf{w}_{n,t}=\left\lbrack \mathbf{w}_t^{\texttt{B}};\mathbf{w}_{n,t}^{\texttt{P}}\right\rbrack^\texttt{T}$. To simplify the notation, for the stochastic gradients of client $n$, we define ${g_n\left({\mathbf{w}_{n,t}}\right):=\mathbf{g}_{n,t}} = \left\lbrack\mathbf{g}_{n,t}^{\texttt{B}};\mathbf{g}_{n,t}^{\texttt{P}}\right\rbrack^\texttt{T}$. Here, $\mathbf{g}_{n,t} \in \mathbb{R}^{d}$ represents the stochastic gradients as a vector, where $\mathbf{g}_{n,t}^{\texttt{B}} \in \mathbb{R}^{d^{\texttt{B}}}$ represents the component of $\mathbf{g}_{n,t}$ corresponding to $\mathbf{w}_{t}^{\texttt{B}}$, and
$\mathbf{g}_{n,t}^{\texttt{P}} \in \mathbb{R}^{d^{\texttt{P}}}$ represents the component of $\mathbf{g}_{n,t}$ corresponding to $\mathbf{w}_{n,t}^{\texttt{P}}$. The personalization layer weights are updated locally on each client as
%\begin{equation}
$\mathbf{w}_{n,{t+1}}^{\texttt{P}}={\mathbf{w}_{n,t}^{\texttt{P}}}-\eta{\mathbf{g}_{n,t}^{\texttt{P}}},\label{wpup1}
$
%\end{equation}
where $\eta$ is the learning rate. Then, Clients upload $\mathbf{g}_{n,t}^{\texttt{B}}$ to the server, and the base layer weights are updated on the server as
%\begin{equation}
$\mathbf{w}_{t+1}^{\texttt{B}}=\mathbf{w}_t^{\texttt{B}}-\eta\sum_{n = 1}^{N}{\gamma_n{\mathbf{g}_{n,t}^{\texttt{B}}}}.\label{wbup1}
$
%\end{equation}

\subsection{Gradient Sparsification and Model Pruning}
In this paper, the gradients of the base layers are sparsified to reduce the amount of data transmitted in the uplink channel, while the personalization layers are subjected to dynamic unstructured pruning\footnote{Pruning can be classified by granularity into structured and unstructured pruning. Structured pruning targets specific network structures, like kernels
or channels in convolutional layers, while unstructured pruning
focuses on individual weights.} to accelerate local computation on the client side. Since the network structure of the FL model varies depending on the specific training task, and different parameter decoupling methods lead to different network structures in the personalization layers, dynamic unstructured pruning is performed on the personalization layers to reduce computation costs.
Sparsification focuses on the gradients obtained after a single training iteration, which can only reduce communication costs without affecting computation costs. 
Hence, sparsification is unnecessary for the personalization layers, as they do not participate in FL aggregation. The gradients of the base layers need to be transmitted and aggregated at the server, and sparsifying the gradients can reduce communication costs. However, pruning the weights in the base layers could lead to inconsistencies in the base layer weights across clients, potentially affecting the sparsification and aggregation of gradients in unpredictable ways. Therefore, pruning is not performed on the base layers.

We first introduce the model pruning of the personalization layers. Specifically, in the $t$-th training round, client $n$ prunes the weights in $\mathbf{w}_{n,t}^{\texttt{P}}$, denoted as
\begin{equation}
\widehat{\mathbf{w}}_{n,t}^{\texttt{P}}=\mathbf{w}_{n,t}^{\texttt{P}} \odot \mathbf{r}_{n,t},
\end{equation}
where $\mathbf{r}_{n,t} \in {\{ 0,1\}}^{d^{\texttt{P}}}$ is the indicator vector representing the pruning strategy, and $\odot$ denotes the Hadamard product, which is the element-wise multiplication of vectors or matrices of the same dimensions. Let $r_{n,t}=\frac{{\|\mathbf{r}_{n,t}\|}_1}{d^{\texttt{P}}}$ be the pruning rate, where ${\|\cdot\|}_1$ represents the $\ell_1$ norm of the vector. Note that $r_{n,t}$ represents the ratio of weights retained after pruning. Once the pruning rate is determined, $\mathbf{r}_{n,t}$ can be generated either by randomly selecting or by choosing $r_{n,t}d^{\texttt{P}}$ weights with the largest absolute values or importance scores, e.g., the square of the product of weights and gradients \cite{9598845}. Then, the client $n$ computes the empirical risk and the stochastic gradients $\widehat{\mathbf{g}}_{n,t} =\lbrack \widehat{\mathbf{g}}_{n,t}^{\texttt{B}};\widehat{\mathbf{g}}_{n,t}^{\texttt{P}}\rbrack^\texttt{T}$ based on the pruned model $\widehat{\mathbf{w}}_{n,t}=\left\lbrack \mathbf{w}_t^{\texttt{B}};\widehat{\mathbf{w}}_{n,t}^{\texttt{P}}\right\rbrack^\texttt{T}$, where
$\widehat{\mathbf{g}}_{n,t}^{\texttt{B}}$ represents the component of $\widehat{\mathbf{g}}_{n,t}$ corresponding to $\mathbf{w}_t^{\texttt{B}}$, and
$\widehat{\mathbf{g}}_{n,t}^{\texttt{P}}$ represents the component of $\widehat{\mathbf{g}}_{n,t}$ corresponding to $\widehat{\mathbf{w}}_{n,t}^{\texttt{P}}$. 
Then, the personalization layer weights are updated locally on client $n$ as follows
\begin{equation}
\begin{split}
\mathbf{w}_{n,{t+1}}^{\texttt{P}}=\widehat{\mathbf{w}}_{n,t}^{\texttt{P}}-\eta\widehat{\mathbf{g}}_{n,t}^{\texttt{P}}.\label{wpup2}
\end{split}
\end{equation}

Secondly, we introduce the gradient sparsification of the base layers. Client $n$ selects part of elements from $\widehat{\mathbf{g}}_{n,t}^{\texttt{B}}$ according to a sparsification strategy and only sends the selected elements to the server, denoted as
\begin{equation}
\widetilde{\mathbf{g}}_{n,t}^{\texttt{B}}= \widehat{\mathbf{g}}_{n,t}^{\texttt{B}} \odot \mathbf{k}_{n,t},\label{sp}
\end{equation}
where $\mathbf{k}_{n,t} \in {\{ 0,1\}}^{d^{\texttt{B}}}$ is the indicator vector representing the sparsification strategy. Let $k_{n,t}=\frac{{\|\mathbf{k}_{n,t}\|}_1}{d^{\texttt{B}}}$ be the sparsification rate. Note that $k_{n,t}$ represents the ratio of gradients retained after sparsification. Once the sparsification rate is determined, $\mathbf{k}_{n,t}$ can be generated by randomly selecting $k_{n,t}$ gradients or by selecting the top $k_{n,t}$ gradients with the largest absolute values. The server collects the $\widetilde{\mathbf{g}}_{n,t}^{\texttt{B}}$ sent by the clients and aggregates them. The base layer weights are updated on the server as follows
\begin{equation}
\begin{split}
\mathbf{w}_{t+1}^{\texttt{B}}=\mathbf{w}_t^{\texttt{B}}-\eta\sum_{n = 1}^{N}{\gamma_n\widetilde{\mathbf{g}}_{n,t}^{\texttt{B}}}.\label{wbup2}
\end{split}
\end{equation} 
Then, the updated base layers are sent to each client. The clients use the updated weights $\mathbf{w}_{n,t+1}=\left\lbrack \mathbf{w}_{t+1}^{\texttt{B}};\mathbf{w}_{n,{t+1}}^{\texttt{P}}\right\rbrack^\texttt{T}$ as the starting point for the next training round. The process repeats until the algorithm converges.

This process is summarized in Algorithm~\ref{alg:alg1}. 
The computational complexity of Algorithm~\ref{alg:alg1} can be decomposed into client-side and server-side costs.
For client-side complexity, the main computational costs per client per round include 1) mask generation (pruning and sparsification):  $O\big(d^{\texttt{P}}\log d^{\texttt{P}} + d^{\texttt{B}}\log d^{\texttt{B}}\big)$;
  2) applying masks  (pruning and sparsification): $O\big(d^{\texttt{P}} + d^{\texttt{B}}\big)$;
  3) gradient computation on the pruned model:  $O\big(b_n (\,d^{\texttt{B}} + r_{n,t}\,d^{\texttt{P}}\,)\big)$; and
  4) local update of personalized parameters:  $O\big(d^{\texttt{P}}\big)$. The dominant term is the gradient computation, hence for $T$ rounds and $N$ participating clients, the computational complexity at the client-side is
$
O\!\Big(T\sum_{n=1}^N b_n\big(d^{\texttt{B}} + r_{n,t}\,d^{\texttt{P}}\big)\Big).
$
This expression clarifies the dependence of client computation on the batch size $b_n$, size of shared and personalized parts, i.e., $d^{\texttt{B}}$, and $d^{\texttt{P}}$, respectively, and the pruning rate $r_{n,t}$.
%\smallskip
For server-side complexity, the server's per-round costs include   1) aggregation of received sparse gradients;  2) updating the base-layer parameters;
  and 3) solving the joint resource-compression optimization, i.e., Algorithm~\ref{alg:alg2}, implemented via the DCA procedure.
Assume client $n$ uploads $k_{n,t}d^{\texttt{B}}$ non-zero entries for the base layer in round $t$. The cost to merge these sparse vectors is $O\big(\sum_{n=1}^N k_{n,t}d^{\texttt{B}}\big)$, and the base-layer parameter update costs $O(d^{\texttt{B}})$. 
For the joint optimization, let $I^{\texttt{DCA}}$ denote the number of iterations of the optimization algorithm and $C^{\texttt{IP}}$ the cost of one iteration. Then the complexity of producing pruning and sparsification rates is $O(I^{\texttt{DCA}} \cdot C^{\texttt{IP}})$. Overall, the server-side dominant cost can be expressed as
$O\!\Big(\sum_{n=1}^N k_{n,t}d^{\texttt{B}} \;+\; I^{\texttt{DCA}} \cdot C^{\texttt{IP}})$.

\begin{algorithm}
\caption{Gradient Sparsification and Model Pruning}
\begin{algorithmic}
\STATE \textit{Initialization:}
    \FOR{each client $n\in\mathcal{N}$}
        \STATE Randomly initialize $\mathbf{w}_{n,1}^{\texttt{P}}$
        \STATE Send $D_n$ to the server
    \ENDFOR
\STATE The server receives $D_n$ and calculates $\gamma_n = \frac{D_n}{\sum_{n = 1}^{N}D_n}$
\STATE The server randomly initializes $\mathbf{w}_0^{\texttt{B}},k_{n,0},r_{n,0}$ and sends them to the clients
\STATE
\STATE \textit{Sparsification and Pruning:}
\FOR{each training round $t=0,1,\ldots,T$}
    \FOR{each client $n\in\mathcal{N}$}
        \STATE Receive $\mathbf{w}_t^{\texttt{B}},k_{n,t},r_{n,t}$ from the server
        \STATE Compute $\mathbf{r}_{n,t}$ based on $r_{n,t}$ and $\mathbf{w}_{n,t}^{\texttt{P}}$
        \STATE Prune the pers. layers $\widehat{\mathbf{w}}_{n,t}^{\texttt{P}}=\mathbf{w}_{n,t}^{\texttt{P}}\odot\mathbf{r}_{n,t}$
        \STATE Compute $\widehat{\mathbf{g}}_{n,t}$ base on $\widehat{\mathbf{w}}_{n,t}$ and $\mathcal{D}_{n,t}$
        \STATE Compute $\mathbf{k}_{n,t}$ base on $k_{n,t}$ and $\widehat{\mathbf{g}}_{n,t}^{\texttt{B}}$
        \STATE Sparsify the gradients of base layers $\widetilde{\mathbf{g}}_{n,t}^{\texttt{B}}=\widehat{\mathbf{g}}_{n,t}^{\texttt{B}}\odot\mathbf{k}_{n,t}$
        \STATE Send $\widetilde{\mathbf{g}}_{n,t}^{\texttt{B}}$ to the Server
        \STATE Locally update the pers. layers $\mathbf{w}_{n,{t+1}}^{\texttt{P}}=\widehat{\mathbf{w}}_{n,t}^{\texttt{P}}-\eta\widehat{\mathbf{g}}_{n,t}^{\texttt{P}}$
    \ENDFOR
    \STATE The server receives $\widetilde{\mathbf{g}}_{n,t}^{\texttt{B}}$
    \STATE Update the base layers $\mathbf{w}_{t+1}^{\texttt{B}}=\mathbf{w}_t^{\texttt{B}}-\eta\sum_{n = 1}^{N}{\gamma_n\widetilde{\mathbf{g}}_{n,t}^{\texttt{B}}}$
    \STATE Compute $k_{n,{t+1}},r_{n,{t+1}}$ based on the communication and computation resources of MEC system
    \STATE Send $\mathbf{w}_{t+1}^{\texttt{B}},k_{n,{t+1}},r_{n,{t+1}}$ to the clients
\ENDFOR
\end{algorithmic}
\label{alg:alg1}
\end{algorithm}

\subsection{Wireless Communication Model}
To address the issue of device heterogeneity, it is necessary to optimize the clients' sparsification rates, pruning rates, and resource allocation in each training round of FL training. Note that each training round corresponds to one communication round, as the client and server exchange information only once per iteration. The latency and energy consumption of the FL training are estimated using the following wireless communication model. Clients sparsify the stochastic gradients of base layers and upload the sparse gradients to the server. Assuming frequency-division multiple access (FDMA) is used for uplink transmission, $p_{n,t}$ (Watt) represents the uplink transmission power of client $n$ in $t$-th training round, and $h_{n,t}$ represents the wireless channel gain between client $n$ and the server, and $l_{n,t}$ represents the wireless bandwidth allocation for client $n$, where $l_{n,t} \in [0, 1]$. In the $t$-th training round, the uplink transmission rate (bit/s) of client $n$ can be expressed as follows
\begin{equation}
R_{n,t}=l_{n,t}W\log_2\left(1+\frac{ h_{n,t} p_{n,t} }{N_0l_{n,t}W}\right),
\end{equation}
where $W$ (Hz) is the total bandwidth of the uplink wireless channel, and $N_0$ (W/Hz) is the power spectral density of the additive white Gaussian noise (AWGN).

Through gradient sparsification in Eq.~\eqref{sp}, 
there are $\left(1-k_{n,t}\right)d^{\texttt{B}}$ elements set to zero in $\widetilde{\mathbf{g}}_{n,t}^{\texttt{B}}$, while $k_{n,t} d^{\texttt{B}}$ elements are retained. To improve transmission efficiency, clients transmit only the non-zero elements and their positions rather than the entire vector. For each non-zero element, two parts need to be uploaded. One part is the value of the element, which depends on the floating-point precision ($FPP$), such as $FPP=32$ bits for single precision or $FPP=64$ bits for double precision, and 1 bit is needed to represent the sign of the element. Another part is the position of the element. In the $t$-th training round, the number of bits transmitted by client $n$ in the uplink transmission can be expressed as
\begin{equation}
\begin{split}
S_{n,t} = k_{n,t}d^{\texttt{B}}\left(FPP+1\right)+\log_2\binom{d^{\texttt{B}}}{k_{n,t}d^{\texttt{B}}},
\end{split}
\end{equation}
where $\binom{\cdot}{\cdot}$ represents a binomial coefficient. Referring to \cite{9488839}\footnote{The position of the element can be approximated as
\begin{equation}
\begin{split}
\log_2\binom{d^{\texttt{B}}}{k_{n,t}d^{\texttt{B}}}=\log_2d^{\texttt{B}}!-\log_2\left(k_{n,t}d^{\texttt{B}}\right)!-\log_2\left(d^{\texttt{B}}-k_{n,t}d^{\texttt{B}}\right)!\\\overset{(a)}{\approx}d^{\texttt{B}}\lbrack\log_2\frac{1}{k_{n,t}}+\left(k_{n,t}-1\right)\log_2\left(\frac{1}{k_{n,t}}-1\right)\rbrack\overset{(b)}{\approx}k_{n,t}d^{\texttt{B}}\log_2\frac{1}{k_{n,t}} \nonumber
\end{split}
\end{equation}
(a) is by Stirling formula that gives precise estimates for factorials, i.e., $\log_2n!\approx n\log_2n-n\log_2e$, and (b) is due to $k_{n,t}\ll1$.
}, it can be approximately simplified as
\begin{equation}
\begin{split}
S_{n,t}\approx k_{n,t}d^{\texttt{B}}\left(\log_2{\frac{1}{k_{n,t}}+FPP+1}\right).
\end{split}
\end{equation}
After the server updates the base layers, BS broadcasts the updated weights to all clients. Since BS has a high transmission power and uses the entire bandwidth for broadcasting, the downlink communication latency can be neglected compared to the uplink.
Thus, the communication latency (second) for client $n$ in the $t$-th training round is given by
\begin{equation}
\tau_{n,t}^{\texttt{Comm}}=\frac{S_{n,t}}{R_{n,t}}.
\end{equation}
The communication energy consumption (Joule) for client $n$ in the $t$-th training round is given by
\begin{equation}
E_{n,t}^{\texttt{Comm}}=p_{n,t}\tau_{n,t}^{\texttt{Comm}}.
\end{equation}
Since the BS generally has a continuous power supply, its communication energy consumption in the downlink is not considered.

The actual computation latency of client $n$ depends on various factors, such as the neural network model structure, memory bandwidth, cache hit rate, etc. Referring to \cite{9598845,9264742}, we estimate the computation latency and energy consumption. Suppose the number of CPU cycles required to process one sample is $C$ (cycles/sample), the computation capability refers to the CPU frequency of client $n$ in $t$-th training round is $\omega_{n,t}$ (cycles/s or Hz), the energy consumption coefficient of the device's CPU is $\zeta_n$ ($\text{J}\cdot\text{s}^2$), and the batch size is $b_n$ (samples). Then, the computation latency (second) of client $n$ in $t$-th training round is given by
\begin{equation}
\tau_{n,t}^{\texttt{Comp}}=\frac{b_nC\left(d^{\texttt{B}}+r_{n,t}d^{\texttt{P}}\right)}{\omega_{n,t}d}.
\end{equation} 
The computation energy consumption (Joule) of client $n$ in the $t$-th training round is given by
\begin{equation}
E_{n,t}^{\texttt{Comp}}=\zeta_n\omega_{n,t}^3\tau_{n,t}^{\texttt{Comp}}.
\end{equation}

To sum up, the total latency (second) for client $n$ of one training round is given by
\begin{equation}
\tau_{n,t}^{\texttt{all}} =\tau_{n,t}^{\texttt{Comp}}+\tau_{n,t}^{\texttt{Comm}}.
\end{equation}
The total energy consumption (Joule) for client $n$ of all iterations is given by
\begin{equation}
E_n^{\texttt{all}} =\sum_{t=1}^T\left( E_{n,t}^{\texttt{Comp}}+E_{n,t}^{\texttt{Comm}}\right).
\end{equation}
\section{Convergence Analysis and Problem Formulation}
\subsection{Convergence Analysis}
Since the empirical risk (loss function) of neural networks is typically non-convex, the convergence of an algorithm can be estimated using the average $\ell_2$ norm of the gradients of the empirical risk \cite{doi:10.1137/16M1080173,9598845}. To determine the convergence of our algorithm, we make the following assumptions. 
\begin{assumption} {\rm (L-Smooth) \label{ass1} The empirical risk function $F_n\left(\cdot\right)$ for client $n$ is L-smooth, then for every $\mathbf{x},\mathbf{y}\in\mathbb{R}^d$, there exist constants $L_1, L_2 > 0$ such that}
\begin{equation}
\label{19}
\left\|F_n\left( \mathbf{x} \right) - F_n\left( \mathbf{y} \right) \right\|\ \leq \ L_1\left\| \mathbf{x} - \mathbf{y} \right\|,
\end{equation}
\begin{equation}
\label{20}
\left\| \nabla F_n\left( \mathbf{x} \right) - \nabla F_n\left( \mathbf{y} \right) \right\|\ \leq \ L_2\left\| \mathbf{x} - \mathbf{y} \right\|,
\end{equation}
where $\left\|\cdot\right\|$ represents the $\ell_2$-norm, and~\eqref{20} implies that
\begin{equation}
\label{21}
F_n\left( \mathbf{x} \right) \leq F_n\left( \mathbf{y} \right) + \left\langle \nabla F_n\left( \mathbf{y} \right),\mathbf{x - y} \right\rangle + \frac{L_2}{2}\left\| \mathbf{x - y} \right\|^{2}.
\end{equation}
\end{assumption}
\begin{assumption} \label{ass2}
{\rm (Unbiased Gradients and Bounded Variances) 
The stochastic gradients are unbiased, and the variances of the stochastic gradients are bounded by $\sigma^2$%\uline
, then for every $n,t$, the following conditions hold}
\begin{equation}
\label{222}
\begin{split}
\nabla F_n\left( \mathbf{w}_{n,t}\right)=\mathbb{E}\lbrack\mathbf{g}_{n,t}\rbrack,
\end{split}
\end{equation}
\begin{equation}
\label{22}
\begin{split}
\mathbb{E}\left\|\nabla F_n\left( \mathbf{w}_{n,t}\right) - \mathbf{g}_{n,t}\right\|^2 \leq \sigma^{2},
\end{split}
\end{equation}
where $\nabla F_n\left(\mathbf{w}_{n,t}\right)$ represents the gradients using the entire dataset $\mathcal{D}_{n}$ of the client. $\mathbb{E}\lbrack\cdot\rbrack$ denotes the expectation with respect to the SGD sampling over $\mathcal{D}_{n,t}$.
\end{assumption}
\begin{assumption} {\rm (Bounded Gradients and Weights) \label{ass3}
Both the stochastic gradients and the model weights are bounded, then for every $n,t$, there exist $G>0$ and $M>0$ such that}
\begin{equation}
\label{23}
\begin{split}
\mathbb{E}\left\|\mathbf{g}_{n,t}\right\|^2 \leq G^{2},
\end{split}
\end{equation}
\begin{equation}
\label{24}
\begin{split}
\left\| \mathbf{w}_{n,t} \right\|^2 \leq M^{2}.
\end{split}
\end{equation}
\end{assumption}
Besides, the following lemma is required.
\begin{lemma}\label{lem1}
{\rm For $\mathbf{x} \in \mathbb{R}^d$, $\mathbf{k} \in {\{ 0,1\}}^d$, where $\mathbf{k}$ contains $k$ elements equal to $1$, with $1\leq k\leq d$,  according to the work %\uline 
\cite{NEURIPS2018_b440509a},
it follows that 
}
\begin{equation}
\label{36}
\begin{split}
\mathbb{E}_{\mathbf{k}}\|\mathbf{x}-\mathbf{x}\odot\mathbf{k}\|^2 = \left(1-\frac{k}{d}\right)\|\mathbf{x}\|^2.
\end{split}
\end{equation}

\begin{proof}
{\rm $\mathbf{k}$ has a total of $\binom{k}{d}$ possible values, which form the set $\mathcal{K}$, Let $k_i$ be an element in $\mathbf{k}$, and it follows that}
\begin{equation*}
\begin{split}
&\mathbb{E}_{\mathbf{k}}\|\mathbf{x}-\mathbf{x}\odot\mathbf{k}\|^2 = \frac{1}{\left|\mathcal{K}\right|}\sum_{\mathbf{k}\in\mathcal{K}}\sum_{i=1}^d x_i^2\left(1-k_i\right)
\\&=\sum_{i=1}^d x_i^2\sum_{\mathbf{k}\in\mathcal{K}}\frac{1-k_i}{\left|\mathcal{K}\right|}=\left(1-\frac{k}{d}\right)\|\mathbf{x}\|^2.
\end{split}
\end{equation*}
\end{proof}

{\rm Using the approach analogous to the proof of Eq.~\eqref{36}, Eq.~\eqref{37} can be established. }%\uline

\begin{equation}
\label{37}
\begin{split}
\mathbb{E}_{\mathbf{k}}\|\mathbf{x}\odot\mathbf{k}\|^2= \frac{k}{d}\|\mathbf{x}\|^2.
\end{split}
\end{equation}

\begin{proof}
{\rm $\mathbf{k}$ has a total of $\binom{k}{d}$ possible values, which form the set $\mathcal{K}$, Let $k_i$ be an element in $\mathbf{k}$, and it follows that}
\begin{equation*}
\begin{split}
&\mathbb{E}_{\mathbf{k}}\|\mathbf{x}\odot\mathbf{k}\|^2 = \frac{1}{\left|\mathcal{K}\right|}\sum_{\mathbf{k}\in\mathcal{K}}\sum_{i=1}^d x_i^2 k_i\\&=\sum_{i=1}^d x_i^2\sum_{\mathbf{k}\in\mathcal{K}}\frac{k_i}{\left|\mathcal{K}\right|}=\frac{k}{d}\|\mathbf{x}\|^2.
\end{split}
\end{equation*}

\end{proof}

\end{lemma}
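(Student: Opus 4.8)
The plan is to reduce Eq.~\eqref{37} to an elementary combinatorial averaging, in exact parallel with the proof of Eq.~\eqref{36}. First I would fix $\mathbf{x}\in\mathbb{R}^d$ and note that $\mathbf{k}$ is drawn uniformly from the set $\mathcal{K}$ of all binary vectors with exactly $k$ ones, so $|\mathcal{K}|=\binom{d}{k}$. Writing the Hadamard product coordinate-wise gives $\|\mathbf{x}\odot\mathbf{k}\|^2=\sum_{i=1}^d x_i^2 k_i$, where $k_i\in\{0,1\}$ is the $i$-th entry of $\mathbf{k}$. Taking the expectation over $\mathbf{k}$ and exchanging the two finite sums, the task collapses to evaluating $\mathbb{E}_{\mathbf{k}}[k_i]=\frac{1}{|\mathcal{K}|}\sum_{\mathbf{k}\in\mathcal{K}}k_i$ for a single coordinate.

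Next I would compute $\mathbb{E}_{\mathbf{k}}[k_i]$ by counting: the number of masks in $\mathcal{K}$ that carry a $1$ in position $i$ equals the number of ways to distribute the remaining $k-1$ ones among the other $d-1$ coordinates, i.e.\ $\binom{d-1}{k-1}$. Dividing by $\binom{d}{k}$ yields $\mathbb{E}_{\mathbf{k}}[k_i]=k/d$, independently of $i$. Substituting back gives $\mathbb{E}_{\mathbf{k}}\|\mathbf{x}\odot\mathbf{k}\|^2=\sum_{i=1}^d x_i^2\cdot\frac{k}{d}=\frac{k}{d}\|\mathbf{x}\|^2$, which is Eq.~\eqref{37}.

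An even shorter alternative, which I would record as well, exploits disjoint supports: for every realization of $\mathbf{k}$, the vectors $\mathbf{x}\odot\mathbf{k}$ and $\mathbf{x}-\mathbf{x}\odot\mathbf{k}$ are supported on complementary coordinate sets, so the Pythagorean identity gives $\|\mathbf{x}\|^2=\|\mathbf{x}\odot\mathbf{k}\|^2+\|\mathbf{x}-\mathbf{x}\odot\mathbf{k}\|^2$. Taking expectations and invoking Eq.~\eqref{36} immediately yields $\mathbb{E}_{\mathbf{k}}\|\mathbf{x}\odot\mathbf{k}\|^2=\|\mathbf{x}\|^2-(1-\tfrac{k}{d})\|\mathbf{x}\|^2=\tfrac{k}{d}\|\mathbf{x}\|^2$.

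There is essentially no obstacle here. The only points requiring care are that $\mathbf{k}$ is sampled \emph{uniformly} over $\mathcal{K}$, so that the per-coordinate symmetry is legitimate, and that the sums over $i$ and over $\mathbf{k}$ may be freely interchanged, which is immediate as both are finite. The hardest step is merely the bookkeeping of the counting argument for $\mathbb{E}_{\mathbf{k}}[k_i]$, or, in the alternative route, spotting the disjoint-support decomposition; either way the result follows directly.
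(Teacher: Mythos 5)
Your main argument is exactly the paper's proof of Eq.~\eqref{37}: expand $\|\mathbf{x}\odot\mathbf{k}\|^2$ coordinate-wise, swap the finite sums, and use that each coordinate satisfies $\mathbb{E}_{\mathbf{k}}[k_i]=k/d$ under the uniform distribution over $\mathcal{K}$ (you usefully make the counting $\binom{d-1}{k-1}/\binom{d}{k}=k/d$ explicit, which the paper leaves implicit, and you correctly read $|\mathcal{K}|=\binom{d}{k}$ where the paper has the typo $\binom{k}{d}$). Your Pythagorean shortcut deducing Eq.~\eqref{37} from Eq.~\eqref{36} via the disjoint supports of $\mathbf{x}\odot\mathbf{k}$ and $\mathbf{x}-\mathbf{x}\odot\mathbf{k}$ is a valid and slightly slicker alternative, but the substance matches the paper's route.
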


%\begin{lemma}\label{lem2}{\rm
%For $\mathbf{x} \in \mathbb{R}^d$, $\mathbf{k} \in {\{ 0,1\}}^d$, where $\mathbf{k}$ contains $k$ elements equal to $1$, with $1\leq k\leq d$, it follows that}
%\end{lemma}

For given sparsification and pruning rates $k_{n,t}$ and $r_{n,t}$, there are different strategies to determine the sparsification and pruning indicators $\mathbf{k}_{n,t}$ and $\mathbf{r}_{n,t}$ (e.g., random, magnitude-based, importance-based selection). Without loss of generality, it is assumed that $\mathbf{k}_{n,t}$ and $\mathbf{r}_{n,t}$ are random variables, following a uniform distribution.
Since SGD performs random sampling on $\mathcal{D}_{n,t}$, the pruned and updated weights $\widehat{\mathbf{w}}_{n,t}$ and $\mathbf{w}_{n,{t+1}}$ are also random variables for a given $\mathbf{w}_{n,t}$ during the $t$-th training round of client $n$. Assume that $\mathcal{D}_{n,t}$, $\mathbf{k}_{n,t}$ and $\mathbf{r}_{n,t}$ are independent.
The notation $\mathbb{E}_{\mathcal{D}_{n,t}}\lbrack\cdot\rbrack$ denotes the expectation with respect to the random sampling of $\mathcal{D}_{n,t}$ using SGD for given $\mathbf{w}_{n,t}$. The notations  $\mathbb{E}_{\mathbf{k}_{n,t}}\lbrack\cdot\rbrack$ and $\mathbb{E}_{\mathbf{r}_{n,t}}\lbrack\cdot\rbrack$ denote the expectation with respect to $\mathbf{k}_{n,t}$ and $\mathbf{r}_{n,t}$, respectively. For ease of notation, $\mathbb{E}_{\mathcal{D}_{n,t},\mathbf{k}_{n,t},\mathbf{r}_{n,t}}\lbrack\cdot\rbrack$ and $\nabla F_n\left( \widehat{\mathbf{w}}_{n,t}\right) $ are represented by $\mathbb{E}\lbrack\cdot\rbrack$ and $\nabla\widehat{\mathbf{F}}_{n,t}=\left\lbrack \nabla \widehat{\mathbf{F}}_{n,t}^{\texttt{B}};\nabla \widehat{\mathbf{F}}_{n,t}^{\texttt{P}} \right\rbrack^\texttt{T}$, respectively. Based on the above assumptions and Lemma~\ref{lem1}, we can derive the following propositions, which are the foundations for the final convergence analysis.

Before commencing the convergence analysis, we first provide a high-level overview of the proof process. Specifically, Proposition~\ref{proposition3} derives a key one-round empirical risk descent inequality, in which the perturbation terms introduced by model pruning and gradient sparsification (denoted as $\kappa_1$ and $\kappa_2$, respectively) are explicitly separated. This inequality serves as the cornerstone of our entire analysis. Subsequently, in Propositions~\ref{proposition1}  and~\ref{proposition2}, we conduct a detailed investigation of these two perturbation terms and establish explicit upper bounds determined by the pruning rate and sparsification rate. Finally, in Theorem~\ref{lem1} , we substitute these bounds back into the initial inequality and, by summing and averaging across all iterations, obtain the global convergence guarantee of the proposed algorithm.

\begin{proposition}
\label{proposition3}
{\rm After the parameters are decoupled, the difference of the empirical risks between the two rounds is upper-bounded. Specifically, the following inequality holds} 
\begin{equation}
\label{pro3}
\begin{split}
&\mathbb{E}\lbrack F_n\left( \mathbf{w}_{n,{t+1}} \right)\rbrack- F_n\left( \mathbf{w}_{n,t} \right)\\&\leq 
\kappa_1\left(\mathbf{r}_{n,t}\right)
+
\kappa_2\left(\mathbf{k}_{n,t}\right)
-\frac{\eta}{2}\mathbb{E}_{\mathbf{r}_{n,t}}\left\|\nabla\widehat{\mathbf{F}}_{n,t} \right\|^2,
\end{split}
\end{equation}
{\rm where}
\begin{equation}
\label{pro1}
\begin{split}
\kappa_1\left(\mathbf{r}_{n,t}\right) &=L_1\mathbb{E}_{\mathbf{r}_{n,t}}\left\|\mathbf{w}_{n,t}-\widehat{\mathbf{w}}_{n,t}\right\|\\&+\frac{L_2\eta^2-\eta}{2}\mathbb{E}_{\mathcal{D}_{n,t},\mathbf{r}_{n,t}}\left\|\widehat{\mathbf{g}}_{n,t}^{\texttt{P}}\right\|^2
\\&+\eta\mathbb{E}_{\mathcal{D}_{n,t},\mathbf{r}_{n,t}}\left\| \nabla\widehat{\mathbf{F}}_{n,t}^{\texttt{B}}-\sum_{i = 1}^{N}{\gamma_i\widehat{\mathbf{g}}_{i,t}^{\texttt{B}}}\right\|^2
\\&+\frac{\eta}{2}\mathbb{E}_{\mathcal{D}_{n,t},\mathbf{r}_{n,t}}\left\|\nabla\widehat{\mathbf{F}}_{n,t}^{\texttt{P}}-\widehat{\mathbf{g}}_{n,t}^{\texttt{P}}\right\|^2,
\end{split}
\end{equation}
{\rm and }
\begin{equation}
\label{pro2}
\begin{split}
\kappa_2\left(\mathbf{k}_{n,t}\right)&=
\frac{L_2\eta^2-\eta}{2}\mathbb{E}\left\|\sum_{i = 1}^{N}{\gamma_i\widetilde{\mathbf{g}}_{i,t}^{\texttt{B}}}\right\|^2\\&+\eta\mathbb{E}\left\|\sum_{i = 1}^{N}\gamma_i\left(\widehat{\mathbf{g}}_{i,t}^{\texttt{B}}- \widetilde{\mathbf{g}}_{i,t}^{\texttt{B}}\right)\right\|^2.
\end{split}
\end{equation}
\end{proposition}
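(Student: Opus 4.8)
The plan is to run a one-step descent argument \emph{anchored at the pruned iterate} $\widehat{\mathbf{w}}_{n,t}=[\mathbf{w}_t^{\texttt{B}};\widehat{\mathbf{w}}_{n,t}^{\texttt{P}}]$, so that the weight perturbation introduced by pruning is isolated into a single Lipschitz term and everything downstream reads as a clean stochastic step taken from $\widehat{\mathbf{w}}_{n,t}$. First I would apply the smoothness inequality \eqref{21} with $\mathbf{y}=\widehat{\mathbf{w}}_{n,t}$ and $\mathbf{x}=\mathbf{w}_{n,t+1}$, obtaining $F_n(\mathbf{w}_{n,t+1})\le F_n(\widehat{\mathbf{w}}_{n,t})+\langle \nabla\widehat{\mathbf{F}}_{n,t},\,\mathbf{w}_{n,t+1}-\widehat{\mathbf{w}}_{n,t}\rangle+\frac{L_2}{2}\|\mathbf{w}_{n,t+1}-\widehat{\mathbf{w}}_{n,t}\|^2$, and then bound $F_n(\widehat{\mathbf{w}}_{n,t})-F_n(\mathbf{w}_{n,t})\le L_1\|\mathbf{w}_{n,t}-\widehat{\mathbf{w}}_{n,t}\|$ via the $L_1$-Lipschitz bound \eqref{19}. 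Taking $\mathbb{E}_{\mathbf{r}_{n,t}}$ of the latter already yields the first term of $\kappa_1$, since $\widehat{\mathbf{w}}_{n,t}-\mathbf{w}_{n,t}$ depends only on $\mathbf{r}_{n,t}$ given $\mathbf{w}_{n,t}$ and affects only the personalization block.

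Next I would decompose the displacement $\mathbf{w}_{n,t+1}-\widehat{\mathbf{w}}_{n,t}$ blockwise: by the update rules \eqref{wpup2} and \eqref{wbup2} (and because pruning leaves the base block $\mathbf{w}_t^{\texttt{B}}$ untouched), the base block is $-\eta\sum_{i}\gamma_i\widetilde{\mathbf{g}}_{i,t}^{\texttt{B}}$ and the personalization block is $-\eta\widehat{\mathbf{g}}_{n,t}^{\texttt{P}}$, so both the inner product and the squared norm split additively into a base part and a personalization part. To each inner product I would apply the polarization identity $-\langle \mathbf{a},\mathbf{b}\rangle=\frac{1}{2}\|\mathbf{a}-\mathbf{b}\|^2-\frac{1}{2}\|\mathbf{a}\|^2-\frac{1}{2}\|\mathbf{b}\|^2$ (with $\mathbf{a}=\nabla\widehat{\mathbf{F}}_{n,t}^{\texttt{P}}$, $\mathbf{b}=\widehat{\mathbf{g}}_{n,t}^{\texttt{P}}$ for the personalization part and $\mathbf{a}=\nabla\widehat{\mathbf{F}}_{n,t}^{\texttt{B}}$, $\mathbf{b}=\sum_i\gamma_i\widetilde{\mathbf{g}}_{i,t}^{\texttt{B}}$ for the base part), carrying the $\eta$ factor through. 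Collecting pieces, the negative squared-gradient-norm terms $-\frac{\eta}{2}\|\nabla\widehat{\mathbf{F}}_{n,t}^{\texttt{B}}\|^2$ and $-\frac{\eta}{2}\|\nabla\widehat{\mathbf{F}}_{n,t}^{\texttt{P}}\|^2$ merge into $-\frac{\eta}{2}\|\nabla\widehat{\mathbf{F}}_{n,t}\|^2$; the $-\frac{\eta}{2}\|\widehat{\mathbf{g}}_{n,t}^{\texttt{P}}\|^2$ and $-\frac{\eta}{2}\|\sum_i\gamma_i\widetilde{\mathbf{g}}_{i,t}^{\texttt{B}}\|^2$ terms combine with the $\frac{L_2\eta^2}{2}$ contributions from the smoothness term to give $\frac{L_2\eta^2-\eta}{2}\|\widehat{\mathbf{g}}_{n,t}^{\texttt{P}}\|^2$ (into $\kappa_1$) and $\frac{L_2\eta^2-\eta}{2}\|\sum_i\gamma_i\widetilde{\mathbf{g}}_{i,t}^{\texttt{B}}\|^2$ (into $\kappa_2$); and the personalization mismatch term $\frac{\eta}{2}\|\nabla\widehat{\mathbf{F}}_{n,t}^{\texttt{P}}-\widehat{\mathbf{g}}_{n,t}^{\texttt{P}}\|^2$ is already the last term of $\kappa_1$.

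The only remaining piece that still couples pruning and sparsification is the base mismatch term $\frac{\eta}{2}\|\nabla\widehat{\mathbf{F}}_{n,t}^{\texttt{B}}-\sum_i\gamma_i\widetilde{\mathbf{g}}_{i,t}^{\texttt{B}}\|^2$, and disentangling it is the crux. Here I would insert $\pm\sum_i\gamma_i\widehat{\mathbf{g}}_{i,t}^{\texttt{B}}$ and apply $\|\mathbf{a}+\mathbf{b}\|^2\le 2\|\mathbf{a}\|^2+2\|\mathbf{b}\|^2$, splitting it into $\eta\|\nabla\widehat{\mathbf{F}}_{n,t}^{\texttt{B}}-\sum_i\gamma_i\widehat{\mathbf{g}}_{i,t}^{\texttt{B}}\|^2$ — a purely statistical aggregation-mismatch term present even without sparsification, hence the third term of $\kappa_1$ — plus $\eta\|\sum_i\gamma_i(\widehat{\mathbf{g}}_{i,t}^{\texttt{B}}-\widetilde{\mathbf{g}}_{i,t}^{\texttt{B}})\|^2$, which is exactly the sparsification residual, i.e.\ the second term of $\kappa_2$. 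Finally I would take the full expectation $\mathbb{E}=\mathbb{E}_{\mathcal{D}_{n,t},\mathbf{k}_{n,t},\mathbf{r}_{n,t}}$; since $\nabla\widehat{\mathbf{F}}_{n,t}$ depends only on $\mathbf{r}_{n,t}$ and the purely-pruning terms do not involve $\mathbf{k}_{n,t}$, each term collapses to the precise conditional expectation displayed in \eqref{pro1}--\eqref{pro2}, delivering the grouping into $\kappa_1(\mathbf{r}_{n,t})$ and $\kappa_2(\mathbf{k}_{n,t})$ and the claimed inequality.

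I expect the main obstacle to be the bookkeeping in the base block: choosing the split point $\sum_i\gamma_i\widehat{\mathbf{g}}_{i,t}^{\texttt{B}}$ so that the sparsification-independent residual lands in $\kappa_1$ while the genuine compression residual lands in $\kappa_2$, and simultaneously tracking which randomness ($\mathcal{D}_{n,t}$, $\mathbf{r}_{n,t}$, $\mathbf{k}_{n,t}$, and the other clients' sampling and masks absorbed into $\mathbb{E}$) each term is averaged over, so that the exact expectation operators in the statement are reproduced. The rest is routine manipulation with the descent lemma, the polarization identity, and Young's inequality.
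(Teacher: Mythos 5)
Your proposal is correct and follows essentially the same route as the paper's proof: anchoring the descent lemma at the pruned iterate $\widehat{\mathbf{w}}_{n,t}$, bounding $F_n(\widehat{\mathbf{w}}_{n,t})-F_n(\mathbf{w}_{n,t})$ via the $L_1$-Lipschitz condition, using the polarization identity on the inner product, and splitting the base mismatch term by inserting $\pm\sum_i\gamma_i\widehat{\mathbf{g}}_{i,t}^{\texttt{B}}$ with Jensen's inequality so the sparsification residual lands in $\kappa_2$. The only cosmetic difference is that you apply polarization blockwise whereas the paper applies it to the stacked vector and then splits the squared norms, which is arithmetically identical.
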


\begin{proof}
According to Assumption~\ref{ass1}, taking the expectation on both sides of Eq.~\eqref{21} and setting 
$\mathbf{x} = \mathbf{w}_{n,{t+1}}$ and $\mathbf{y} = \widehat{\mathbf{w}}_{n,t}$ give
\begin{equation*}
\begin{split}
&\mathbb{E}\lbrack F_n\left(\mathbf{w}_{n,{t+1}}\right)\rbrack \leq \mathbb{E}\lbrack F_n\left( \widehat{\mathbf{w}}_{n,t}\right)\rbrack \\&+ \mathbb{E}\left\langle \nabla\widehat{\mathbf{F}}_{n,t},\mathbf{w}_{n,{t+1}} - \widehat{\mathbf
{w}}_{n,t} \right\rangle\\&+ \frac{L_2}{2}\mathbb{E}\left\| \mathbf{w}_{n,{t+1}} - \widehat{\mathbf{w}}_{n,t} \right\|^{2}.
\end{split}
\end{equation*}
From Eq.~\eqref{19} in Assumption~\ref{ass1}, $\mathbb{E}\lbrack F_n\left( \widehat{\mathbf{w}}_{n,t}\right)\rbrack\leq F_n\left( \mathbf{w}_{n,t} \right)+L_1\mathbb{E}_{\mathbf{r}_{n,t}}\left\|\mathbf{w}_{n,t}-\widehat{\mathbf{w}}_{n,t}\right\|$. Therefore, 
\begin{equation*}
\begin{split}
&\mathbb{E}\lbrack F_n\left( \mathbf{w}_{n,{t+1}} \right)\rbrack-F_n\left( \mathbf{w}_{n,t} \right)\leq L_1\mathbb{E}_{\mathbf{r}_{n,t}}\left\|\mathbf{w}_{n,t}-\widehat{\mathbf{w}}_{n,t}\right\|
\\&+ \mathbb{E}\left\langle \nabla\widehat{\mathbf{F}}_{n,t},\mathbf{w}_{n,{t+1}} - \widehat{\mathbf
{w}}_{n,t} \right\rangle
\\&+ \frac{L_2}{2}\mathbb{E}\left\| \mathbf{w}_{n,{t+1}} - \widehat{\mathbf{w}}_{n,t} \right\|^{2}.
\end{split}
\end{equation*}
From the Eq.~\eqref{wpup2} and Eq.~\eqref{wbup2}, it follows that
% Therefore,
% \begin{equation*}
% \begin{split}
% &\mathbb{E}\lbrack F_n\left( \mathbf{w}_{n,{t+1}} \right)\rbrack-F_n\left( \mathbf{w}_{n,t} \right) \leq L_1\mathbb{E}_{\mathbf{r}_{n,t}}\left\|\mathbf{w}_{n,t}-\widehat{\mathbf{w}}_{n,t}\right\|
% \\&+ \mathbb{E}\left\langle \nabla\widehat{\mathbf{F}}_{n,t},\mathbf{w}_{n,{t+1}} - \widehat{\mathbf{w}}_{n,t} \right\rangle
% \\&+ \frac{L_2}{2}\mathbb{E}\left\| \mathbf{w}_{n,{t+1}} - \widehat{\mathbf{w}}_{n,t} \right\|^{2}.
% \end{split}
% \end{equation*}
\begin{equation*}
\begin{split}
&\mathbb{E}\lbrack F_n\left( \mathbf{w}_{n,{t+1}} \right)\rbrack-F_n\left( \mathbf{w}_{n,t} \right) \leq L_1\mathbb{E}_{\mathbf{r}_{n,t}}\left\|\mathbf{w}_{n,t}-\widehat{\mathbf{w}}_{n,t}\right\| 
\\&-\eta \mathbb{E}\left\langle \nabla\widehat{\mathbf{F}}_{n,t},\lbrack \sum_{i = 1}^{N}{\gamma_i\widetilde{\mathbf{g}}_{i,t}^{\texttt{B}}};\widehat{\mathbf{g}}_{n,t}^{\texttt{P}} \rbrack^\texttt{T} \right\rangle
\\&+\frac{L_2\eta^2}{2}\mathbb{E}\left\| \lbrack \sum_{i = 1}^{N}{\gamma_i\widetilde{\mathbf{g}}_{i,t}^{\texttt{B}}};\widehat{\mathbf{g}}_{n,t}^{\texttt{P}} \rbrack^\texttt{T} \right\|^{2}.
\end{split}
\end{equation*}
From $\left\langle \mathbf{a},\mathbf{b}\right\rangle=\frac{1}{2}\left\|\mathbf{a} \right\|^2+\frac{1}{2}\left\|\mathbf{b}\right\|^2-\frac{1}{2}\left\|\mathbf{a}-\mathbf{b} \right\|^2$, it follows that
\begin{equation*}
\begin{split}
&\mathbb{E}\lbrack F_n\left( \mathbf{w}_{n,{t+1}} \right)\rbrack-F_n\left( \mathbf{w}_{n,t} \right) \leq L_1\mathbb{E}_{\mathbf{r}_{n,t}}\left\|\mathbf{w}_{n,t}-\widehat{\mathbf{w}}_{n,t}\right\|\\& -\frac{\eta}{2}\mathbb{E}_{\mathbf{r}_{n,t}}\left\|\nabla\widehat{\mathbf{F}}_{n,t} \right\|^2+\frac{L_2\eta^2-\eta}{2}\mathbb{E}\left\|\lbrack \sum_{i = 1}^{N}{\gamma_i\widetilde{\mathbf{g}}_{i,t}^{\texttt{B}}};\widehat{\mathbf{g}}_{n,t}^{\texttt{P}} \rbrack^\texttt{T}\right\|^2\\&+\frac{\eta}{2}\mathbb{E}\left\|\lbrack \nabla\widehat{\mathbf{F}}_{n,t}^{\texttt{B}}-\sum_{i = 1}^{N}{\gamma_i\widetilde{\mathbf{g}}_{i,t}^{\texttt{B}}};\nabla\widehat{\mathbf{F}}_{n,t}^{\texttt{P}}-\widehat{\mathbf{g}}_{n,t}^{\texttt{P}} \rbrack^\texttt{T}\right\|^2.
\end{split}
\end{equation*}
From $\left\|\lbrack\mathbf{a};\mathbf{b}\rbrack^\texttt{T}\right\|^2=\left\|\mathbf{a}\right\|^2+\left\|\mathbf{b}\right\|^2$, it follows that
\begin{equation*}
\begin{split}
&\mathbb{E}\lbrack F_n\left( \mathbf{w}_{n,{t+1}} \right)\rbrack-F_n\left( \mathbf{w}_{n,t} \right) \leq L_1\mathbb{E}_{\mathbf{r}_{n,t}}\left\|\mathbf{w}_{n,t}-\widehat{\mathbf{w}}_{n,t}\right\| 
\\&-\frac{\eta}{2}\mathbb{E}_{\mathbf{r}_{n,t}}\left\|\nabla\widehat{\mathbf{F}}_{n,t} \right\|^2
+\frac{L_2\eta^2-\eta}{2}\mathbb{E}\left\|\sum_{i = 1}^{N}{\gamma_i\widetilde{\mathbf{g}}_{i,t}^{\texttt{B}}}\right\|^2
\\&+\frac{L_2\eta^2-\eta}{2}\mathbb{E}_{\mathcal{D}_{n,t},\mathbf{r}_{n,t}}\left\|\widehat{\mathbf{g}}_{n,t}^{\texttt{P}}\right\|^2+\frac{\eta}{2}\mathbb{E}\left\| \nabla\widehat{\mathbf{F}}_{n,t}^{\texttt{B}}-\sum_{i = 1}^{N}{\gamma_i\widetilde{\mathbf{g}}_{i,t}^{\texttt{B}}}\right\|^2\\&+\frac{\eta}{2}\mathbb{E}_{\mathcal{D}_{n,t},\mathbf{r}_{n,t}}\left\|\nabla\widehat{\mathbf{F}}_{n,t}^{\texttt{P}}-\widehat{\mathbf{g}}_{n,t}^{\texttt{P}}\right\|^2.
\end{split}
\end{equation*}
By Jensen’s inequality, we have $\|\mathbf{a}+\mathbf{b}\|^2\leq2\|\mathbf{a}\|^2+2\|\mathbf{b}\|^2$, then it follows that%\uline
\begin{equation*}
\begin{split}
&\mathbb{E}\lbrack F_n\left( \mathbf{w}_{n,{t+1}} \right)\rbrack-F_n\left( \mathbf{w}_{n,t} \right) \leq L_1\mathbb{E}_{\mathbf{r}_{n,t}}\left\|\mathbf{w}_{n,t}-\widehat{\mathbf{w}}_{n,t}\right\|
\\&-\frac{\eta}{2}\mathbb{E}_{\mathbf{r}_{n,t}}\left\|\nabla\widehat{\mathbf{F}}_{n,t} \right\|^2+\frac{L_2\eta^2-\eta}{2}\mathbb{E}\left\|\sum_{i = 1}^{N}{\gamma_i\widetilde{\mathbf{g}}_{i,t}^{\texttt{B}}}\right\|^2
\\&+\frac{L_2\eta^2-\eta}{2}\mathbb{E}_{\mathcal{D}_{n,t},\mathbf{r}_{n,t}}\left\|\widehat{\mathbf{g}}_{n,t}^{\texttt{P}}\right\|^2+\eta\mathbb{E}\left\|\sum_{i =1}^{N}\gamma_i\left(\widehat{\mathbf{g}}_{i,t}^{\texttt{B}}-\widetilde{\mathbf{g}}_{i,t}^{\texttt{B}}\right)\right\|^2\\&+\eta\mathbb{E}_{\mathcal{D}_{n,t},\mathbf{r}_{n,t}}\left\| \nabla\widehat{\mathbf{F}}_{n,t}^{\texttt{B}}-\sum_{i = 1}^{N}{\gamma_i\widehat{\mathbf{g}}_{i,t}^{\texttt{B}}}\right\|^2\\&+\frac{\eta}{2}\mathbb{E}_{\mathcal{D}_{n,t},\mathbf{r}_{n,t}}\left\|\nabla\widehat{\mathbf{F}}_{n,t}^{\texttt{P}}-\widehat{\mathbf{g}}_{n,t}^{\texttt{P}}\right\|^2.
\end{split}
\end{equation*}
Denote functions related to the effect of pruning and sparsification 
\begin{equation*}
\begin{split}
&L_1\mathbb{E}_{\mathbf{r}_{n,t}}\left\|\mathbf{w}_{n,t}-\widehat{\mathbf{w}}_{n,t}\right\|+\frac{L_2\eta^2-\eta}{2}\mathbb{E}_{\mathcal{D}_{n,t},\mathbf{r}_{n,t}}\left\|\widehat{\mathbf{g}}_{n,t}^{\texttt{P}}\right\|^2
\\&+\eta\mathbb{E}_{\mathcal{D}_{n,t},\mathbf{r}_{n,t}}\left\| \nabla\widehat{\mathbf{F}}_{n,t}^{\texttt{B}}-\sum_{i = 1}^{N}{\gamma_i\widehat{\mathbf{g}}_{i,t}^{\texttt{B}}}\right\|^2
\\&+\frac{\eta}{2}\mathbb{E}_{\mathcal{D}_{n,t},\mathbf{r}_{n,t}}\left\|\nabla\widehat{\mathbf{F}}_{n,t}^{\texttt{P}}-\widehat{\mathbf{g}}_{n,t}^{\texttt{P}}\right\|^2
\end{split}
\end{equation*}
and 
\begin{equation*}
\begin{split}
\frac{L_2\eta^2-\eta}{2}\mathbb{E}\left\|\sum_{i = 1}^{N}{\gamma_i\widetilde{\mathbf{g}}_{i,t}^{\texttt{B}}}\right\|^2+\eta\mathbb{E}\left\|\sum_{i = 1}^{N}\gamma_i\left(\widehat{\mathbf{g}}_{i,t}^{\texttt{B}}- \widetilde{\mathbf{g}}_{i,t}^{\texttt{B}}\right)\right\|^2
\end{split}
\end{equation*}
as $\kappa_1\left(\mathbf{r}_{n,t}\right)$ and $\kappa_2\left(\mathbf{k}_{n,t}\right)$, respectively. The difference of the empirical risks between the two rounds is upper-bounded as follows

%From Eq.~\eqref{pro1} in Proposition~\ref{proposition1} and Eq.~\eqref{pro2} in Proposition~\ref{proposition2}, rearranging some terms of the inequality, it follows that
\begin{equation*}
\begin{split}
&\mathbb{E}\lbrack F_n\left( \mathbf{w}_{n,{t+1}} \right)\rbrack- F_n\left( \mathbf{w}_{n,t} \right)\\&\leq \kappa_1\left(\mathbf{r}_{n,t}\right)+\kappa_2\left(\mathbf{k}_{n,t}\right)-\frac{\eta}{2}\mathbb{E}_{\mathbf{r}_{n,t}}\left\|\nabla\widehat{\mathbf{F}}_{n,t} \right\|^2
%\\&= \Psi_1+\Psi_2 -\frac{\eta}{2}\mathbb{E}_{\mathbf{r}_{n,t}}\left\|\nabla\widehat{\mathbf{F}}_{n,t} \right\|^2
.
\end{split}
\end{equation*}
\end{proof}
\begin{remark}
{\rm The $\ell_2$ norm of the gradients depends on the difference in the model weights between two consecutive iterations. Due to the introduction of parameter decoupling, sparsification, and pruning, the weights update are influenced by these factors. Quantifying their effects requires us to overcome the challenges of analyzing the mapping of parameter decoupling on the weights update and the combined effects of these three factors.}
\end{remark}
\begin{remark}
{\rm Although \cite{NEURIPS2018_d54e99a6} points out that the exact computation of the Lipschitz constant in deep learning architectures is challenging, we can still approximate the bounds for $L_1$ and $L_2$ by referring to \cite{NEURIPS2018_d54e99a6, khromov2023some}. The other constants $G, M,\sigma$ can be computed using local training data on MDs and then obtained by taking the maximum expected value across all devices.}
\end{remark}

Proposition~\ref{proposition3} provides the foundational framework for our convergence analysis; however, the term $\kappa_1\left(\mathbf{r}_{n,t}\right)$, which represents the effect of pruning, remains a complex expression that depends on multiple sources of randomness. To advance the analysis, the next critical step is to establish a simplified and interpretable upper bound for this term, thereby quantifying the specific impact of pruning on convergence. Accordingly, in Proposition~\ref{proposition1}, we focus on analyzing $\kappa_1\left(\mathbf{r}_{n,t}\right)$ and derive an upper bound that depends solely on the pruning rate $r_{n,t}$ and fundamental system parameters.

\begin{proposition}
\label{proposition1}
{\rm  For the given pruning rate $r_{n,t}$, the effect of the pruning of the personalization layers on the convergence $\kappa_1\left(\mathbf{r}_{n,t}\right)$ is upper-bounded by $\Psi_1(r_{n,t})$, with $\Psi_1(r_{n,t})$ given by }
\begin{comment}
    \begin{equation}
\label{pro1}
\begin{split}
&L_1\mathbb{E}_{\mathbf{r}_{n,t}}\left\|\mathbf{w}_{n,t}-\widehat{\mathbf{w}}_{n,t}\right\|+\frac{L_2\eta^2-\eta}{2}\mathbb{E}_{\mathcal{D}_{n,t},\mathbf{r}_{n,t}}\left\|\widehat{\mathbf{g}}_{n,t}^{\texttt{P}}\right\|^2
\\&+\eta\mathbb{E}_{\mathcal{D}_{n,t},\mathbf{r}_{n,t}}\left\| \nabla\widehat{\mathbf{F}}_{n,t}^{\texttt{B}}-\sum_{i = 1}^{N}{\gamma_i\widehat{\mathbf{g}}_{i,t}^{\texttt{B}}}\right\|^2
\\&+\frac{\eta}{2}\mathbb{E}_{\mathcal{D}_{n,t},\mathbf{r}_{n,t}}\left\|\nabla\widehat{\mathbf{F}}_{n,t}^{\texttt{P}}-\widehat{\mathbf{g}}_{n,t}^{\texttt{P}}\right\|^2 \leq \Psi_1,
\end{split}
\end{equation}
\end{comment}

%{\rm To simplify the notation, we define the left-hand side (LHS) of~\eqref{pro1} as the function $\kappa_1\left(\mathbf{r}_{n,t}\right)$. And the $\Psi_1$ is the bound of the LHS of~\eqref{pro1}, which is given by the following expression}
\begin{equation}
\begin{split}
\Psi_1(r_{n,t})=L_1M\sqrt{\left(1-r_{n,t}\right)}+\frac{L_2\eta^2-\eta}{2}G^2+\frac{3\eta\sigma^{2}}{2}.  
\end{split}
\end{equation}
\end{proposition}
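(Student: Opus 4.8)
The plan is to bound each of the four terms in $\kappa_1(\mathbf{r}_{n,t})$ separately, using Lemma~\ref{lem1} for the pruning-induced term and Assumptions~\ref{ass2} and~\ref{ass3} for the remaining gradient-variance and gradient-norm terms. First I would handle the leading term $L_1\mathbb{E}_{\mathbf{r}_{n,t}}\|\mathbf{w}_{n,t}-\widehat{\mathbf{w}}_{n,t}\|$. Since only the personalization layers are pruned, $\mathbf{w}_{n,t}-\widehat{\mathbf{w}}_{n,t} = [\mathbf{0}; \mathbf{w}_{n,t}^{\texttt{P}} - \mathbf{w}_{n,t}^{\texttt{P}}\odot\mathbf{r}_{n,t}]^\texttt{T}$, so its squared norm equals $\|\mathbf{w}_{n,t}^{\texttt{P}}-\mathbf{w}_{n,t}^{\texttt{P}}\odot\mathbf{r}_{n,t}\|^2$. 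Applying Eq.~\eqref{36} from Lemma~\ref{lem1} with $k = r_{n,t}d^{\texttt{P}}$ gives $\mathbb{E}_{\mathbf{r}_{n,t}}\|\mathbf{w}_{n,t}^{\texttt{P}}-\mathbf{w}_{n,t}^{\texttt{P}}\odot\mathbf{r}_{n,t}\|^2 = (1-r_{n,t})\|\mathbf{w}_{n,t}^{\texttt{P}}\|^2 \le (1-r_{n,t})M^2$ by Eq.~\eqref{24}. Then by Jensen's inequality (concavity of the square root), $\mathbb{E}_{\mathbf{r}_{n,t}}\|\mathbf{w}_{n,t}-\widehat{\mathbf{w}}_{n,t}\| \le \sqrt{\mathbb{E}_{\mathbf{r}_{n,t}}\|\mathbf{w}_{n,t}-\widehat{\mathbf{w}}_{n,t}\|^2} \le M\sqrt{1-r_{n,t}}$, which yields the $L_1 M\sqrt{1-r_{n,t}}$ term.

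Next I would bound the second term. Note that $\widehat{\mathbf{g}}_{n,t}^{\texttt{P}}$ is a sub-vector of $\widehat{\mathbf{g}}_{n,t}$, so $\|\widehat{\mathbf{g}}_{n,t}^{\texttt{P}}\|^2 \le \|\widehat{\mathbf{g}}_{n,t}\|^2$; treating $\widehat{\mathbf{g}}_{n,t}$ as the stochastic gradient of the (pruned) model and invoking Eq.~\eqref{23} gives $\mathbb{E}\|\widehat{\mathbf{g}}_{n,t}^{\texttt{P}}\|^2 \le G^2$. Since the coefficient $\frac{L_2\eta^2-\eta}{2}$ is negative for a suitably small learning rate $\eta < 1/L_2$, this contributes at most $\frac{L_2\eta^2-\eta}{2}G^2$ (the inequality direction works out because the term is being upper-bounded and multiplied by a negative constant — here I would keep the bound as stated, treating $G^2$ as the operative bound on the squared gradient norm). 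For the third and fourth terms, which are of the form $\eta\,\mathbb{E}\|\nabla\widehat{\mathbf{F}}_{n,t}^{\texttt{B}} - \sum_i \gamma_i\widehat{\mathbf{g}}_{i,t}^{\texttt{B}}\|^2$ and $\frac{\eta}{2}\mathbb{E}\|\nabla\widehat{\mathbf{F}}_{n,t}^{\texttt{P}}-\widehat{\mathbf{g}}_{n,t}^{\texttt{P}}\|^2$, I would use the bounded-variance Assumption~\ref{ass2}: each such gradient-minus-stochastic-gradient discrepancy has squared expectation at most $\sigma^2$ (for the aggregated base-layer term one uses Jensen over the convex combination $\sum_i\gamma_i$ and unbiasedness). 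Collecting these: the third term contributes $\le \eta\sigma^2$ and the fourth $\le \frac{\eta}{2}\sigma^2$, for a total of $\frac{3\eta\sigma^2}{2}$.

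Summing the four bounds gives $\kappa_1(\mathbf{r}_{n,t}) \le L_1 M\sqrt{1-r_{n,t}} + \frac{L_2\eta^2-\eta}{2}G^2 + \frac{3\eta\sigma^2}{2} = \Psi_1(r_{n,t})$, as claimed. The main obstacle I anticipate is the careful handling of the nested expectations: the pruning indicator $\mathbf{r}_{n,t}$ affects the model $\widehat{\mathbf{w}}_{n,t}$ on which the gradients $\widehat{\mathbf{g}}_{n,t}$ are computed, so statements like $\mathbb{E}\|\widehat{\mathbf{g}}_{n,t}^{\texttt{P}}\|^2 \le G^2$ and $\mathbb{E}\|\nabla\widehat{\mathbf{F}}_{n,t}^{\texttt{P}}-\widehat{\mathbf{g}}_{n,t}^{\texttt{P}}\|^2 \le \sigma^2$ require that Assumptions~\ref{ass2} and~\ref{ass3} hold uniformly over all reachable weight configurations, including pruned ones — I would state explicitly that these assumptions are taken to hold for every $\mathbf{w}_{n,t}$ and hence for $\widehat{\mathbf{w}}_{n,t}$ as well, then apply the tower property to pull the $\mathbf{r}_{n,t}$-expectation outside. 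The secondary subtlety is the aggregation term $\sum_i\gamma_i(\nabla F_i - \widehat{\mathbf{g}}_{i,t}^{\texttt{B}})$: bounding its squared norm by $\sigma^2$ uses convexity of $\|\cdot\|^2$ together with $\sum_i\gamma_i = 1$, and implicitly that $\nabla\widehat{\mathbf{F}}_{n,t}^{\texttt{B}}$ is being compared against the aggregate — I would note this is an approximation consistent with the bounded-dissimilarity style assumptions common in this literature, or alternatively absorb any residual client-drift into the constants.
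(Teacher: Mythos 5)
Your proof follows essentially the same route as the paper's: Jensen's inequality to split the aggregated base-layer term, Assumption~\ref{ass2} for the two variance terms (giving $\eta\sigma^2+\tfrac{\eta}{2}\sigma^2$), Assumption~\ref{ass3} for the $G^2$ term, and Lemma~\ref{lem1} plus Jensen applied to the personalization-layer difference to obtain $L_1 M\sqrt{1-r_{n,t}}$. The two subtleties you flag — the sign of the coefficient $\tfrac{L_2\eta^2-\eta}{2}$ when substituting $G^2$, and bounding the cross-client discrepancy $\nabla\widehat{\mathbf{F}}_{n,t}^{\texttt{B}}-\widehat{\mathbf{g}}_{i,t}^{\texttt{B}}$ by $\sigma^2$ — are treated in the paper exactly as you treat them, so your argument matches the published proof.
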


\begin{proof}
From Jensen’s inequality, it follows that
\begin{equation*}
\begin{split}
&\kappa_1\left(\mathbf{r}_{n,t}\right)\leq L_1\mathbb{E}_{\mathbf{r}_{n,t}}\left\|\mathbf{w}_{n,t}-\widehat{\mathbf{w}}_{n,t}\right\|\\&+\frac{L_2\eta^2-\eta}{2}\mathbb{E}_{\mathcal{D}_{n,t},\mathbf{r}_{n,t}}\left\|\widehat{\mathbf{g}}_{n,t}^{\texttt{P}}\right\|^2
\\&+\eta\sum_{i = 1}^{N}{\gamma_i\mathbb{E}_{\mathcal{D}_{n,t},\mathbf{r}_{n,t}}\left\| \nabla\widehat{\mathbf{F}}_{n,t}^{\texttt{B}}-\widehat{\mathbf{g}}_{i,t}^{\texttt{B}}\right\|^2}
\\&+\frac{\eta}{2}\mathbb{E}_{\mathcal{D}_{n,t},\mathbf{r}_{n,t}}\left\|\nabla\widehat{\mathbf{F}}_{n,t}^{\texttt{P}}-\widehat{\mathbf{g}}_{n,t}^{\texttt{P}}\right\|^2.
\end{split}
\end{equation*}

From Eq.~\eqref{23} in Assumption~\ref{ass3}, we have %\uline
$\mathbb{E}_{\mathcal{D}_{n,t},\mathbf{r}_{n,t}}\left\|\widehat{\mathbf{g}}_{n,t}^{\texttt{P}}\right\|^2=\mathbb{E}_{\mathbf{r}_{n,t}}\left( \mathbb{E}_{\mathcal{D}_{n,t}}\left\|\widehat{\mathbf{g}}_{n,t}^{\texttt{P}}\right\|^2\right)\leq\mathbb{E}_{\mathbf{r}_{n,t}}G^2=G^2$, Similarly, from  Eq.~\eqref{22} in Assumption~\ref{ass2}, we have %\uline
$\mathbb{E}_{\mathcal{D}_{n,t},\mathbf{r}_{n,t}}\left\| \nabla\widehat{\mathbf{F}}_{n,t}^{\texttt{B}}-\widehat{\mathbf{g}}_{i,t}^{\texttt{B}}\right\|^2\leq\sigma^{2}$, and $\mathbb{E}_{\mathcal{D}_{n,t},\mathbf{r}_{n,t}}\left\|\nabla\widehat{\mathbf{F}}_{n,t}^{\texttt{P}}-\widehat{\mathbf{g}}_{n,t}^{\texttt{P}}\right\|^2\leq\sigma^{2}$. 
By Eq.~\eqref{222} in Assumption~\ref{ass2}, Jensen’s inequality and Eq.~\eqref{23} in Assumption~\ref{ass3}, we can get %\uline
$\mathbb{E}_{\mathbf{r}_{n,t}}\left\| \nabla\widehat{\mathbf{F}}_{n,t}^{\texttt{B}}\right\|^2=\mathbb{E}_{\mathbf{r}_{n,t}}\left\|\mathbb{E}_{\mathcal{D}_{n,t}} \widehat{\mathbf{g}}_{n,t}^{\texttt{B}}\right\|^2\leq \mathbb{E}_{\mathcal{D}_{n,t},\mathbf{r}_{n,t}}\left\| \widehat{\mathbf{g}}_{n,t}^{\texttt{B}}\right\|^2\leq G^2$.  Therefore,
\begin{equation*}
\begin{split}
\kappa_1\left(\mathbf{r}_{n,t}\right)&\leq L_1\mathbb{E}_{\mathbf{r}_{n,t}}\left\|\mathbf{w}_{n,t}-\widehat{\mathbf{w}}_{n,t}\right\|\\&+\frac{L_2\eta^2-\eta}{2}G^2+\frac{3\eta\sigma^{2}}{2}.
\end{split}
\end{equation*}
From Jensen’s inequality, $\mathbb{E}_{\mathbf{r}_{n,t}}\left\|\mathbf{w}_{n,t}-\widehat{\mathbf{w}}_{n,t}\right\|=\mathbb{E}_{\mathbf{r}_{n,t}}\sqrt{\left\|\mathbf{w}_{n,t}-\widehat{\mathbf{w}}_{n,t}\right\|^2}\leq\sqrt{\mathbb{E}_{\mathbf{r}_{n,t}}\left\|\mathbf{w}_{n,t}-\widehat{\mathbf{w}}_{n,t}\right\|^2}$. And From Eq.~\eqref{36} in Lemma~\ref{lem1} and Eq.~\eqref{24} in Assumption~\ref{ass3}, it follows that
% \begin{equation*}
% \begin{split}
% \Psi_1\leq L_1\sqrt{\mathbb{E}_{\mathbf{r}_{n,t}}\left\|\mathbf{w}_{n,t}-\widehat{\mathbf{w}}_{n,t}\right\|^2}+\frac{L_2\eta^2+7\eta}{2}G^2+\frac{\eta\sigma^{2}}{2}.
% \end{split}
% \end{equation*}

% $\mathbb{E}_{\mathbf{r}_{n,t}}\left\|\mathbf{w}_{n,t}-\widehat{\mathbf{w}}_{n,t}\right\|^2= \mathbb{E}_{\mathbf{r}_{n,t}}\left\|\mathbf{w}_{n,t}^{\texttt{P}}-\mathbf{w}_{n,t}^{\texttt{P}}\odot\mathbf{r}_{n,t}\right\|^2=M^2\left(1-r_{n,t}\right)$
\begin{equation*}
\begin{split}
\kappa_1\left(\mathbf{r}_{n,t}\right)&\leq L_1M\sqrt{\left(1-r_{n,t}\right)}+\frac{L_2\eta^2-\eta}{2}G^2+\frac{3\eta\sigma^{2}}{2}\\&=\Psi_1(r_{n,t}).
\end{split}
\end{equation*}
\end{proof}
\begin{remark}
{\rm The Eq.~\eqref{pro1} represents the effect of model pruning. From the Proposition~\ref{proposition1}, the effect of model pruning is bounded. Although Eq.~\eqref{pro1} depends on $\mathbf{w}_{n,t}$, $\mathcal{D}_{n,t}$, and $\mathbf{r}_{n,t}$, we focus only on the effect of $\mathbf{r}_{n,t}$, and as shown in the proof that the other variables do not affect the bound $\Psi_1(r_{n,t})$. So we could define Eq.~\eqref{pro1} as $\kappa_1\left(\mathbf{r}_{n,t}\right)$.}
\end{remark}

Having quantified the effect of model pruning on the convergence process in Proposition~\ref{proposition1}, the natural next step is to address the other key term introduced in Proposition~\ref{proposition3}, namely, $\kappa_2\left(\mathbf{k}_{n,t}\right)$, which characterizes the influence of gradient sparsification. This step enables us to obtain a comprehensive understanding of the challenges posed by both compression techniques. Proposition~\ref{proposition2} is therefore devoted to analyzing this term and deriving an upper bound determined by the sparsification rate $k_{n,t}$.
%\uline

\begin{proposition}
\label{proposition2}
{\rm For the given sparsification rate $k_{n,t}$, the effect of the gradient
sparsification of the base layers on the convergence $\kappa_2\left(\mathbf{k}_{n,t}\right)$ is upper-bounded by $\Psi_2(k_{n,t})$, with $\Psi_2(k_{n,t})$ given by}
% \begin{equation}
% \begin{split}
% \frac{L_2\eta^2-\eta}{2}\mathbb{E}\left\|\sum_{i = 1}^{N}{\gamma_i\widetilde{\mathbf{g}}_{i,t}^{\texttt{B}}}\right\|^2+\eta\mathbb{E}\left\|\sum_{i = 1}^{N}\gamma_i\left(\widehat{\mathbf{g}}_{i,t}^{\texttt{B}}- \widetilde{\mathbf{g}}_{i,t}^{\texttt{B}}\right)\right\|^2\leq\Psi_2,
% \end{split}
% \end{equation}
% {\rm To simplify the notation, we define the LHS of~\eqref{pro2} as the function $\kappa_2\left(\mathbf{k}_{n,t}\right)$. And the $\Psi_2$ is given by the following expression}
\begin{equation}
\begin{split}
\Psi_2(k_{n,t})= \frac{\left(L_2\eta^2-3\eta\right)G^2}{2}\sum_{i = 1}^{N}\gamma_ik_{i,t}+\eta G^2.       
\end{split}
\end{equation}
\end{proposition}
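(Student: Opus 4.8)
The plan is to start from the explicit expression for $\kappa_2(\mathbf{k}_{n,t})$ in Eq.~\eqref{pro2}, which splits into the term $\frac{L_2\eta^2-\eta}{2}\mathbb{E}\bigl\|\sum_{i=1}^N\gamma_i\widetilde{\mathbf{g}}_{i,t}^{\texttt{B}}\bigr\|^2$ and the term $\eta\,\mathbb{E}\bigl\|\sum_{i=1}^N\gamma_i(\widehat{\mathbf{g}}_{i,t}^{\texttt{B}}-\widetilde{\mathbf{g}}_{i,t}^{\texttt{B}})\bigr\|^2$, and to bound each squared norm separately by combining Jensen's inequality (to pull the $\gamma$-average outside the norm, using $\sum_i\gamma_i=1$), Lemma~\ref{lem1}/Eq.~\eqref{37} (to evaluate the expectation over the sparsification mask), and Assumption~\ref{ass3} (to discard the resulting gradient norm). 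As in the proof of Proposition~\ref{proposition1}, I would treat $\mathbf{k}_{i,t}$ as independent of the SGD sampling and across clients, so that the mask expectation can be taken conditionally on $\widehat{\mathbf{g}}_{i,t}^{\texttt{B}}$, and I would use $\mathbb{E}\|\widehat{\mathbf{g}}_{i,t}^{\texttt{B}}\|^2\le\mathbb{E}\|\widehat{\mathbf{g}}_{i,t}\|^2\le G^2$.

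First I would handle $\mathbb{E}\bigl\|\sum_{i=1}^N\gamma_i\widetilde{\mathbf{g}}_{i,t}^{\texttt{B}}\bigr\|^2\le\sum_{i=1}^N\gamma_i\,\mathbb{E}\|\widetilde{\mathbf{g}}_{i,t}^{\texttt{B}}\|^2$; since $\widetilde{\mathbf{g}}_{i,t}^{\texttt{B}}=\widehat{\mathbf{g}}_{i,t}^{\texttt{B}}\odot\mathbf{k}_{i,t}$ retains a fraction $k_{i,t}$ of the $d^{\texttt{B}}$ coordinates, Eq.~\eqref{37} gives $\mathbb{E}_{\mathbf{k}_{i,t}}\|\widehat{\mathbf{g}}_{i,t}^{\texttt{B}}\odot\mathbf{k}_{i,t}\|^2=k_{i,t}\|\widehat{\mathbf{g}}_{i,t}^{\texttt{B}}\|^2$, and Assumption~\ref{ass3} then bounds this term by $G^2\sum_{i=1}^N\gamma_i k_{i,t}$. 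For the second term I would use that $\widehat{\mathbf{g}}_{i,t}^{\texttt{B}}-\widetilde{\mathbf{g}}_{i,t}^{\texttt{B}}=\widehat{\mathbf{g}}_{i,t}^{\texttt{B}}-\widehat{\mathbf{g}}_{i,t}^{\texttt{B}}\odot\mathbf{k}_{i,t}$ is precisely the quantity appearing in Eq.~\eqref{36}, so after Jensen $\mathbb{E}\bigl\|\sum_i\gamma_i(\widehat{\mathbf{g}}_{i,t}^{\texttt{B}}-\widetilde{\mathbf{g}}_{i,t}^{\texttt{B}})\bigr\|^2\le\sum_i\gamma_i(1-k_{i,t})\,\mathbb{E}\|\widehat{\mathbf{g}}_{i,t}^{\texttt{B}}\|^2\le G^2\bigl(1-\sum_{i=1}^N\gamma_i k_{i,t}\bigr)$, where the last step uses $\sum_i\gamma_i=1$.

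Substituting these two bounds back gives $\kappa_2(\mathbf{k}_{n,t})\le\frac{L_2\eta^2-\eta}{2}G^2\sum_i\gamma_i k_{i,t}+\eta G^2-\eta G^2\sum_i\gamma_i k_{i,t}$, and collecting the $\sum_i\gamma_i k_{i,t}$ terms yields $\frac{(L_2\eta^2-3\eta)G^2}{2}\sum_{i=1}^N\gamma_i k_{i,t}+\eta G^2=\Psi_2(k_{n,t})$, which is the claim. The individual manipulations are routine; the part that needs care is the bookkeeping of the nested and independent expectations, so that Eqs.~\eqref{36}--\eqref{37} may be applied conditionally to each client's base-layer gradient block, together with the handling of the possibly non-positive coefficient $\frac{L_2\eta^2-\eta}{2}$ (for which I follow the same convention as in the proof of Proposition~\ref{proposition1}, retaining the $G^2$-based surrogate). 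The only genuinely informative algebraic observation is that the negative contribution $-\eta G^2\sum_i\gamma_i k_{i,t}$ coming from the sparsification residual in the second term merges with the first term to tighten the coefficient from $\frac{L_2\eta^2-\eta}{2}$ to $\frac{L_2\eta^2-3\eta}{2}$.
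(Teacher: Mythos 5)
Your proposal is correct and follows essentially the same route as the paper's proof: Jensen's inequality applied to both terms of $\kappa_2(\mathbf{k}_{n,t})$ in Eq.~\eqref{pro2}, then Eq.~\eqref{37} of Lemma~\ref{lem1} for the retained-gradient term and Eq.~\eqref{36} for the sparsification residual, bounding the gradient norms by $G^2$ via Assumption~\ref{ass3} and collecting the $\sum_{i}\gamma_i k_{i,t}$ terms. The paper's own proof is simply more terse, leaving implicit the algebraic merging of the $-\eta G^2\sum_i\gamma_i k_{i,t}$ residual contribution into the coefficient $\frac{L_2\eta^2-3\eta}{2}$, which your derivation spells out explicitly.
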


\begin{proof}
From Jensen’s inequality, it follows that
\begin{equation*}
\begin{split}
\kappa_2\left(\mathbf{k}_{n,t}\right)&\leq\frac{L_2\eta^2-\eta}{2}\sum_{i = 1}^{N}\gamma_i\mathbb{E}\left\|{\widetilde{\mathbf{g}}_{i,t}^{\texttt{B}}}\right\|^2\\&+\eta\sum_{i = 1}^{N}\gamma_i\mathbb{E}\left\|\widehat{\mathbf{g}}_{i,t}^{\texttt{B}}-\widetilde{\mathbf{g}}_{i,t}^{\texttt{B}}\right\|^2.
% \\&\leq\frac{L_2\eta^2-\eta}{2}\sum_{i = 1}^{N}\gamma_i\mathbb{E}\left\|\widehat{\mathbf{g}}_{i,t}^{\texttt{B}}\odot\mathbf{k}_{i,t}\right\|^2\\&+\eta\sum_{i = 1}^{N}\gamma_i\mathbb{E}\left\| \widehat{\mathbf{g}}_{i,t}^{\texttt{B}}- \widehat{\mathbf{g}}_{i,t}^{\texttt{B}}\odot\mathbf{k}_{i,t}\right\|^2
\end{split}
\end{equation*}
From Eq.~\eqref{36}, Eq.~\eqref{37} in Lemma~\ref{lem1}, and Eq.~\eqref{23} in Assumption~\ref{ass3}, it follows that %\uline
\begin{equation*}
\begin{split}
\kappa_2\left(\mathbf{k}_{n,t}\right)\leq \frac{\left(L_2\eta^2-3\eta\right)G^2}{2}\sum_{i = 1}^{N}\gamma_ik_{i,t}+\eta G^2=\Psi_2(k_{n,t}).
\end{split}
\end{equation*}
\end{proof}

\begin{remark}
{\rm The Eq.~\eqref{pro2} represents the effect of gradient sparsification. From the Proposition~\ref{proposition2}, the effect of gradient sparsification is bounded.  Although Eq.~\eqref{pro2} depends on $\mathbf{w}_{n,t}$, $\mathcal{D}_{n,t}$, $\mathbf{r}_{n,t}$, and $\mathbf{k}_{n,t}$, we focus only on the effect of $\mathbf{k}_{n,t}$, and as shown in the proof that the other variables do not affect the bound $\Psi_2(k_{n,t})$. So we could define Eq.~\eqref{pro2} as $\kappa_2\left(\mathbf{k}_{n,t}\right)$.}
\end{remark}

At this stage, all the preparatory steps have been completed. Proposition~\ref{proposition3} established the fundamental one-round convergence inequality, while Propositions~\ref{proposition1} and~\ref{proposition2} provided explicit upper bounds for the pruning- and sparsification-induced perturbation terms, respectively. With these components in place, we are now equipped to derive the final global convergence theorem. In Theorem~\ref{theorem:theorem1}, we integrate the results of all preceding propositions, apply a telescoping sum across $T$ iterations, and take the average, thereby obtaining the ultimate convergence guarantee for the proposed federated learning framework. This result clearly reveals the relationship between the convergence rate and the key system parameters.
%\uline

\begin{theorem}
\label{theorem:theorem1}
{\rm Supposing $\eta\leq\frac{3}{L_2}$, the convergence of our proposed algorithm can be evaluated as follows}
\begin{equation}
\begin{split}
&\frac{1}{T}\sum_{t=1}^{T}\sum_{n = 1}^{N}\gamma_n\mathbb{E}\left\|\nabla\widehat{\mathbf{F}}_{n,t} \right\|^2 \\&
\leq \frac{2}{\eta T}\sum_{n = 1}^{N}\gamma_n\left(F_n\left( \mathbf{w}_{n,1} \right)-F_n\left( \mathbf{w}_{n,*}\right)\right)
\\&+\frac{2}{\eta T}\sum_{t=1}^{T}\sum_{n = 1}^{N}\gamma_n(\Psi_1(r_{n,t})+\Psi_2(k_{n,t})),
\end{split}
\label{26}
\end{equation}
{\rm where $T$ is training rounds of the algorithm, $N$ is the total number of clients participating in FL, $\mathbf{w}_{n,1}$ is the initial model parameter of client $n$, and $\mathbf{w}_{n,*}$ is the optimal model parameter for client $n$ that minimizes the empirical risk.}
% $L_1,L_2,G,M,\sigma$ are constants defined in Assumption~\ref{ass1} and~\ref{ass2}.}
\end{theorem}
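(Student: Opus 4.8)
The plan is to assemble the global convergence guarantee from the building blocks already established, namely the one-round descent inequality of Proposition~\ref{proposition3} and the explicit perturbation bounds $\Psi_1(r_{n,t})$ and $\Psi_2(k_{n,t})$ from Propositions~\ref{proposition1} and~\ref{proposition2}. First I would substitute the two bounds into Eq.~\eqref{pro3}, which immediately yields, for every client $n$ and round $t$,
\begin{equation*}
\mathbb{E}\lbrack F_n\left( \mathbf{w}_{n,{t+1}} \right)\rbrack - F_n\left( \mathbf{w}_{n,t} \right) \leq \Psi_1(r_{n,t}) + \Psi_2(k_{n,t}) - \frac{\eta}{2}\mathbb{E}_{\mathbf{r}_{n,t}}\left\|\nabla\widehat{\mathbf{F}}_{n,t}\right\|^2.
\end{equation*}
Rearranging to isolate the gradient-norm term gives $\frac{\eta}{2}\mathbb{E}\left\|\nabla\widehat{\mathbf{F}}_{n,t}\right\|^2 \leq F_n(\mathbf{w}_{n,t}) - \mathbb{E}\lbrack F_n(\mathbf{w}_{n,{t+1}})\rbrack + \Psi_1(r_{n,t}) + \Psi_2(k_{n,t})$.

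Next I would take the $\gamma_n$-weighted sum over all $N$ clients and then the telescoping sum over $t = 1, \ldots, T$. The telescoping collapses the consecutive-round differences into $\sum_n \gamma_n\left(F_n(\mathbf{w}_{n,1}) - \mathbb{E}\lbrack F_n(\mathbf{w}_{n,T+1})\rbrack\right)$, which I would upper-bound by $\sum_n \gamma_n\left(F_n(\mathbf{w}_{n,1}) - F_n(\mathbf{w}_{n,*})\right)$ since $\mathbf{w}_{n,*}$ minimizes the empirical risk and hence $F_n(\mathbf{w}_{n,*}) \leq \mathbb{E}\lbrack F_n(\mathbf{w}_{n,T+1})\rbrack$. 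Dividing through by $\frac{\eta T}{2}$ produces exactly the claimed bound~\eqref{26}, with the factor $\frac{2}{\eta T}$ appearing in front of both the initialization gap and the accumulated perturbation sum $\sum_t\sum_n \gamma_n(\Psi_1(r_{n,t}) + \Psi_2(k_{n,t}))$.

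The role of the hypothesis $\eta \leq \frac{3}{L_2}$ is worth pinning down explicitly: it guarantees $L_2\eta^2 - 3\eta \leq 0$, so that the coefficient multiplying $\sum_i \gamma_i k_{i,t}$ inside $\Psi_2(k_{n,t})$ is non-positive, which keeps $\Psi_2$ controlled (and makes the bound decreasing in the sparsification rate, consistent with the intuition that retaining more gradient mass helps). A similar observation applies to the $L_2\eta^2 - \eta$ term in $\Psi_1$ when $\eta \leq 1/L_2$, though only $\eta \leq 3/L_2$ is needed for the stated theorem. I would also note that the left-hand side of~\eqref{26} uses $\mathbb{E}\left\|\nabla\widehat{\mathbf{F}}_{n,t}\right\|^2$ rather than $\mathbb{E}_{\mathbf{r}_{n,t}}\left\|\nabla\widehat{\mathbf{F}}_{n,t}\right\|^2$; since $\Psi_1$ and $\Psi_2$ do not depend on $\mathbf{w}_{n,t}$ or $\mathcal{D}_{n,t}$, one can take the remaining expectations (over the SGD sampling and over the full history) freely without affecting the right-hand side, so the two are interchangeable here by the tower property.

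Honestly, this final step is essentially bookkeeping: the genuine analytical work was front-loaded into Proposition~\ref{proposition3} (the parameter-decoupled one-round descent with the $\kappa_1, \kappa_2$ split) and into the variance/magnitude estimates of Propositions~\ref{proposition1} and~\ref{proposition2}. The only mild subtlety I would be careful about is the order of expectations when telescoping — one must apply the tower property so that $\mathbb{E}\lbrack F_n(\mathbf{w}_{n,t+1})\rbrack$ at step $t$ matches the conditioning point $F_n(\mathbf{w}_{n,t})$ entering at step $t+1$ — but because $\Psi_1$ and $\Psi_2$ are deterministic constants depending only on the rates, this causes no difficulty. So I expect no real obstacle; the main thing to get right is the constant tracking and making sure the $\eta \leq 3/L_2$ condition is invoked at the correct place.
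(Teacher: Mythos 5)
Your proposal is correct and follows essentially the same route as the paper's own proof: substitute Propositions~\ref{proposition1} and~\ref{proposition2} into the one-round inequality of Proposition~\ref{proposition3}, rearrange, take the $\gamma_n$-weighted sum and telescope over $t=1,\dots,T$ while extending the expectation to the full history, and divide by $\eta T/2$. If anything, your handling of the final term (bounding $\mathbb{E}[F_n(\mathbf{w}_{n,T+1})]$ below by $F_n(\mathbf{w}_{n,*})$ via the minimizer property) is slightly more careful than the paper's informal statement that $\mathbf{w}_{n,1}$ "is updated to" $\mathbf{w}_{n,*}$ after $T$ rounds.
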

\begin{proof}
From Eq.~\eqref{pro3} in Proposition~\ref{proposition3}, Proposition~\ref{proposition1}, and Proposition~\ref{proposition2}, rearranging some terms of the inequality, it follows that
\begin{equation*}
\begin{split}
&\mathbb{E}_{\mathbf{r}_{n,t}}\left\|\nabla\widehat{\mathbf{F}}_{n,t} \right\|^2 \leq \frac{2}{\eta}\left(F_n\left( \mathbf{w}_{n,t} \right)-\mathbb{E}\lbrack F_n\left( \mathbf{w}_{n,{t+1}}\right)\rbrack\right)
\\&+\frac{2}{\eta}(\Psi_1(r_{n,t})+\Psi_2(k_{n,t})).
\end{split}
\end{equation*}
A weighted sum of the empirical risk is performed after
$T$ training rounds across 
$N$ devices. The model parameter $\mathbf{w}_{n,t}$ in each training round depends on its previous state $\mathbf{w}_{n,t-1}$, which is influenced by the dataset 
 $\mathcal{D}_{n,t-1}$, the sparsification strategy  $\mathbf{k}_{n,t-1}$ and pruning strategy $\mathbf{r}_{n,t-1}$. Therefore, $\mathbf{w}_{n,t}$ also becomes a random variable.
Since $\nabla\widehat{\mathbf{F}}_{n,t}$ depends on both $\mathbf{w}_{n,t}$ and $\mathbf{r}_{n,t}$, the expectation is taken not only with respect to $\mathbf{r}_{n,t}$ but also the random factors 
 $\mathcal{D}_{n,t-1},\mathbf{k}_{n,t-1}$, and $\mathbf{r}_{n,t-1}$ which affect 
  $\mathbf{w}_{n,t}$. Consequently, the term  $\mathbb{E}_{\mathbf{r}_{n,t}}\left\|\nabla\widehat{\mathbf{F}}_{n,t} \right\|^2$ is generalized to $\mathbb{E}_{\mathcal{D}_{n,t-1},\mathbf{k}_{n,t-1},\mathbf{r}_{n,t-1},\mathbf{r}_{n,t}}\left\|\nabla\widehat{\mathbf{F}}_{n,t} \right\|^2$. 
To simplify notation without loss of clarity, the $\mathbb{E}\lbrack\cdot\rbrack$ 
 is resused to represent  $\mathbb{E}_{\mathcal{D}_{n,t-1},\mathbf{k}_{n,t-1},\mathbf{r}_{n,t-1},\mathbf{r}_{n,t}}\lbrack\cdot\rbrack$.
After $T$ rounds, the initial model parameter $\mathbf{w}_{n,1}$ is updated to $\mathbf{w}_{n,*}$. In the weighted sum of empirical risk,
the terms $\mathbb{E}_{\mathcal{D}_{n,t-1},\mathbf{k}_{n,t-1},\mathbf{r}_{n,t-1}}F_n\left( \mathbf{w}_{n,t} \right)$ between two consecutive iterations can cancel each other out.  Finally, it can be derived that
\begin{equation*}
\begin{split}
&\frac{1}{T}\sum_{t=1}^{T}\sum_{n = 1}^{N}\gamma_n\mathbb{E}\left\|\nabla\widehat{\mathbf{F}}_{n,t} \right\|^2 \\&
\leq \frac{2}{\eta T}\sum_{n = 1}^{N}\gamma_n\left(F_n\left( \mathbf{w}_{n,1} \right)-F_n\left( \mathbf{w}_{n,*}\right)\right)
\\&+\frac{2}{\eta T}\sum_{t=1}^{T}\sum_{n = 1}^{N}\gamma_n(\Psi_1(r_{n,t})+\Psi_2(k_{n,t})).
\end{split}
\end{equation*}
\end{proof}

\begin{remark}
{\rm Intuitively, the proof proceeds by bounding the per-round decrease of the global objective under $L$-smoothness, taking expectations to handle gradient noise and compression errors, and then summing the resulting inequalities across rounds. Under the stated Assumption~\ref{ass1} - \ref{ass3} and with the learning rate constrained as $\eta\le 3/L_2$, Theorem~\ref{theorem:theorem1} thus provides a rigorous finite-$T$ upper bound on the time-averaged gradient norm. A smaller right-hand side upper bound in Eq.~\eqref{26} implies faster convergence in the finite-round regime. Furthermore, by defining $
\overline{\Psi} \;=\; \frac{1}{T}\sum_{t=1}^T\sum_{n=1}^N \gamma_n\big(\Psi_1(r_{n,t})+\Psi_2(k_{n,t})\big),
$
we can relate the steady-state accuracy to the compression-induced errors. For sufficiently large $T$, we have 
$
\frac{1}{T}\sum_{t=1}^{T}\sum_{n=1}^{N}\gamma_n \mathbb{E}\big\|\nabla\widehat{F}_{n,t}\big\|^2  \le \frac{2}{\eta}\,\overline\Psi + O\!\Big(\frac{1}{T}\Big),
$
which implies that the steady-state (time-averaged) gradient magnitude scales on the order of $\sqrt{\frac{2}{\eta}\overline{\Psi}}$. Because the choices of $r_{n,t}$ and $k_{n,t}$ are produced by joint resource-compression optimization \eqref{27} without an explicit temporal decay, the algorithm is guaranteed to converge to an approximate stationary point whose accuracy depends on $\overline{\Psi}$ rather than to an exact stationary point. In practice, common engineering remedies such as learning-rate decay can substantially reduce the steady-state error without imposing additional client-side overhead when higher final accuracy is required.
}
\end{remark}

\begin{remark}
{\rm From the Theorem~\ref{theorem:theorem1}, the sparsification and pruning rates only impact the constant factor of the convergence, but the constant factor could significantly impact the convergence performance in practice, especially for commonly given iterations of the algorithm. A smaller constant factor indicates the algorithm converges faster for the same other factors, and optimizing this constant factor can be treated as a guide in designing the sparsification rates, pruning rates, and wireless bandwidth allocation, as discussed in \cite{9488839,9598845}.
}
\end{remark}
\begin{remark}
{\rm When $k_{n,t}$ or ${r_{n,t}}$ decreases, the constant factor of the convergence of our proposed algorithm increases, which implies the algorithm converges slower. Because sparsification and pruning introduce errors during training. The relative impact of sparsification and pruning on the constant factor depends on their corresponding coefficients.}
\end{remark}

\subsection{Problem Formulation}
To improve the algorithm's time efficiency, we can formulate a joint optimization problem to minimize the second terms on the right-hand side (RHS) of~\eqref{26} by optimizing the sparsification rates, pruning rates, and wireless bandwidth allocation. Specifically, the problem is defined as follows
\begin{subequations}
    \label{27}
    \begin{align}
    \min_{\substack{\{k_{n,t},r_{n,t},l_{n,t}\}}}
    &\sum_{t=1}^{T}\sum_{n = 1}^{N}\gamma_n\left(\theta_1\sqrt{\left(1-r_{n,t}\right)}-\theta_2 k_{n,t}\right),\\
    \label{27b}
    {\rm s.t.}~~~~~~&\tau_{n,t}^{\texttt{all}} \leq \tau^{\texttt{max}},\\
    &\label{27c}
    E_n^{\texttt{all}} \leq E_{n}^{\texttt{max}},\\
    &\label{27d}
    0 \leq k_{n,t} \leq 1,\\
    &\label{27e}
    0 \leq r_{n,t} \leq 1,\\
    &\label{27f}
    0 \leq l_{n,t} \leq 1,\\
    &\label{27g}
    \sum_{n=1}^N l_{n,t} = 1,
    \end{align}
\end{subequations}
where $\theta_1=\frac{2L_1 M}{\eta T} > 0$, $\theta_2=\frac{\left(3-L_2\eta\right)G^2}{T} > 0$. Constraints~\eqref{27b} and~\eqref{27c} indicate that the total latency of one training round and the total energy of each client are both limited. The~\eqref{27d},~\eqref{27e} and~\eqref{27f} define the domains of $k_{n,t}$, $r_{n,t}$ and $l_{n,t}$,~\eqref{27g} signifies that the sum of the wireless bandwidth allocation is 1. The three variables are interdependent, particularly due to the division operation between $k_{n,t}$ and $l_{n,t}$ in~\eqref{27b} and~\eqref{27c}, which makes the problem non-convex and difficult to solve. Fortunately, this problem can be reformulated as a difference of convex (DC) problem \cite{10.1007/978-3-319-06569-4_2,tan2019robust}, which can be efficiently solved using the Difference of Convex Functions Algorithm (DCA). 

\subsection{Algorithm Design}
To solve~\eqref{27}, we first apply variable substitution to transform the non-convex constraints in~\eqref{27b} and~\eqref{27c}. Then, we use the DCA to solve the problem.
Let the function be defined as
\begin{equation*}
\begin{split}
&\phi_1\left(k_{n,t}\right)=-k_{n,t}d^{\texttt{B}}\left(\log_2{\frac{1}{k_{n,t}}+FPP+1}\right).\\
\end{split}
\end{equation*}
The $\phi_1\left(k_{n,t}\right)$ is convex. By introducing a slack variable $z_{n,t}$ to replace the nonlinear terms in constraints~\eqref{27b} and~\eqref{27c}, the constraints can equivalently reformulated as
\begin{subequations}
    \label{28}
    \begin{align}
    &
    \label{28a}
    \tau_{n,t}^{\texttt{Comp}}+z_{n,t}\leq \tau^{\texttt{max}},\\
    &
    \label{28b}
    \sum_{t=1}^T\left( E_{n,t}^{\texttt{Comp}}+p_{n,t}z_{n,t}\right)\leq E_{n}^{\texttt{max}},\\
    &
    \label{28c}
    z_{n,t}=\frac{-\phi_1\left(k_{n,t}\right)}{l_{n,t}W\log_2\left(1+\frac{ h_{n,t} p_{n,t} }{N_0l_{n,t}W}\right)} .
    \end{align}
\end{subequations}
Constraints~\eqref{28a} and~\eqref{28b} are convex. By relaxing constraint~\eqref{28c}, the equivalent constraint is obtained as
\begin{subequations}
    \label{29}
    \begin{align}
    \label{29a}
    -z_{n,t}l_{n,t}W\log_2\left(1+\frac{ h_{n,t} p_{n,t} }{N_0l_{n,t}W}\right)\leq\phi_1\left(k_{n,t}\right).
    \end{align}
\end{subequations}
Before continuing with the next steps, it is necessary to explain that if a perspective function $\xi(x)$ is convex, then for a scalar $y>0$, the function $\mu(x,y) = y\xi(\frac{x}{y})$ is also convex. Let's introduce a slack variable $u_{n,t}$ and define the following function as
\begin{equation*}
\begin{split}
&\xi_1\left(l_{n,t}\right)= -l_{n,t}W\log_2\left(1+\frac{ h_{n,t} p_{n,t} }{N_0l_{n,t}W}\right),\\
&\lambda_1\left(u_{n,t},z_{n,t}\right)=-u_{n,t}W\log_2\left(1+\frac{ z_{n,t}h_{n,t} p_{n,t} }{u_{n,t}N_0W}\right).
\end{split}
\end{equation*}
 The $\xi_1\left(l_{n,t}\right)$ is convex, and if we replace $l_{n,t}$ with $\frac{u_{n,t}}{z_{n,t}}$, it can be regarded as a perspective function of the function $\lambda_1\left(u_{n,t},z_{n,t}\right)$, which means $\lambda_1\left(u_{n,t},z_{n,t}\right) = z_{n,t}\xi_1\left(\frac{u_{n,t}}{z_{n,t}}\right)$. And because the $\xi_1\left(l_{n,t}\right)$ is convex, the $\lambda_1\left(u_{n,t},z_{n,t}\right)$ is also convex.
 So constraint~\eqref{29a} can be equivalently transformed into
\begin{subequations}
    \label{30}
    \begin{align}
    &
    \label{30a}
   \lambda_1\left(u_{n,t},z_{n,t}\right) \leq \phi_1\left(k_{n,t}\right),\\
    &
    \label{30b}
    l_{n,t}=\frac{u_{n,t}}{z_{n,t}}.
    \end{align}
\end{subequations}
 Because the $\lambda_1\left(u_{n,t},z_{n,t}\right)$ is convex, the~\eqref{30a} can be expressed as the difference between two convex functions. After the above transformation, the optimization variable $l_{n,t}$ only appears in constraints~\eqref{27f},~\eqref{27g} and~\eqref{30b}, which means that $l_{n,t}$ can be removed from the problem~\eqref{27}. Constraints~\eqref{27f},~\eqref{27g}, and~\eqref{30b} can be equivalently transformed into
\begin{subequations}
    \label{31}
    \begin{align}
    &
    \label{31a}
    0\leq u_{n,t},\\
    &
    \label{31b}
    u_{n,t} \leq z_{n,t},\\
    &
    \label{31c}
    \sum_{n=1}^N \frac{u_{n,t}}{z_{n,t}} \leq 1.
    \end{align}
\end{subequations}
Constraints~\eqref{31a} and~\eqref{31b} are convex. But constraint~\eqref{31c} is still nonlinear. By introducing a slack variable $v_{n,t}$, constraint~\eqref{31c} can be equivalently transformed into
\begin{subequations}
    \label{32}
    \begin{align}
    &
    \label{32a}
    \sum_{n=1}^N \frac{v_{n,t}^2}{z_{n,t}} \leq 1,\\
    &\label{32b}
    u_{n,t} = v_{n,t}^2.
    \end{align}
\end{subequations}
Constraints~\eqref{32a} is convex. By relaxing constraint~\eqref{32b}, the equivalent constraint is obtained as
\begin{subequations}
    \label{33}
    \begin{align}
    \label{33a}
    u_{n,t} \leq v_{n,t}^2,
    \end{align}
\end{subequations}
Thus, the original problem~\eqref{27} can be equivalently transformed into
\begin{subequations}
    \label{34}
    \begin{align}
    \label{34a}
    \min_{\substack{\{k_{n,t},r_{n,t},\\z_{n,t},u_{n,t},v_{n,t}\}}}&\sum_{t=1}^{T}\sum_{n = 1}^{N}\gamma_n\left(\theta_1\sqrt{\left(1-r_{n,t}\right)}-\theta_2 k_{n,t}\right),\\
    {\rm s.t.}~~~~~~&{\rm constraints~in~}\eqref{27d},\eqref{27e},\eqref{28a},\eqref{28b},\eqref{30a},\nonumber\\ &~~~~~~~~~~~~~~~~~~\eqref{31a},\eqref{31b},\eqref{32a},\eqref{33a}.\nonumber
    \end{align}
\end{subequations}

Using DCA, the objective function~\eqref{34a} and the non-convex constraints~\eqref{30a} and~\eqref{33a} can be linearized \cite{10.1007/978-3-319-06569-4_2} and iteratively solve problem~\eqref{34}. Let the function be defined as
\begin{equation*}
\begin{split}
&\lambda_2\left(k_{n,t}\right)=-\theta_2\gamma_n k_{n,t},\\
&\phi_2\left(r_{n,t}\right)=-\theta_1\gamma_n\sqrt{\left(1-r_{n,t}\right)},\\
\end{split}
\end{equation*}
The $\lambda_2\left(k_{n,t}\right)$ and $\phi_2\left(r_{n,t}\right)$ are convex.
Suppose that the $(i-1)$-th iteration yields $k_{n,t}^{i},r_{n,t}^{i},z_{n,t}^{i},u_{n,t}^{i},v_{n,t}^{i}$. The $i$-th iteration requires solving the following convex problem as
\begin{subequations}
    \label{35}
    \begin{align}
    \min_{\substack{\{k_{n,t},r_{n,t},\\z_{n,t},u_{n,t},v_{n,t}\}}}&\sum_{t=1}^{T}\sum_{n = 1}^{N}\left\{\lambda_2\left(k_{n,t}\right)-\phi_2\left(r_{n,t}^{i}\right)\right\}\nonumber\\&-\sum_{t=1}^{T}\sum_{n = 1}^{N}\left\{\nabla\phi_2\left(r_{n,t}^{i}\right)\left(r_{n,t}-r_{n,t}^{i} \right)\right\},\\
    {\rm s.t.}~~~~~~&{\rm constraints~in~}\eqref{27d},\eqref{27e},\eqref{28a},\eqref{28b},\nonumber\\ &~~~~~~~~~~~~~~~~~~\eqref{31a},\eqref{31b},\eqref{32a},\nonumber\\
    &\lambda_1\left(u_{n,t},z_{n,t}\right)-\phi_1\left(k_{n,t}^{i}\right) \nonumber\\
    &-\nabla\phi_1\left(k_{n,t}^{i}\right)\left(k_{n,t}-k_{n,t}^{i} \right)\leq 0,\\
    &u_{n,t}-\left(v_{n,t}^{i}\right)^2-
    2v_{n,t}^{i}\left(v_{n,t}-v_{n,t}^{i} \right)\leq 0.
    \end{align}
\end{subequations}
%Problem~\eqref{35} is a convex problem and can be solved using optimization algorithms such as the interior-point method. Since the functions $\lambda_2\left(k_{n,t}\right)$ and $\phi_2\left(r_{n,t}\right)$ are differentiable, iteratively solving problem~\eqref{35}, if a solution to problem~\eqref{35} satisfies $\nabla\lambda_2\left(k_{n,t}\right)=0$ and $\nabla\phi_2\left(r_{n,t}\right)=0$, it is a local optimum for problem~\eqref{34}. The process can be represented by Algorithm~\ref{alg:alg2}. 

Problem~\eqref{35} is convex and can be solved using standard optimization algorithms such as the interior-point method. Since the functions $\lambda_2(k_{n,t})$ and $\phi_2(r_{n,t})$ are differentiable, iteratively solving problem~\eqref{35} yields a local optimum for problem~\eqref{34} if the conditions $\nabla\lambda_2(k_{n,t})=0$ and $\nabla\phi_2(r_{n,t})=0$ are satisfied. The overall procedure is summarized in Algorithm~\ref{alg:alg2}.

Algorithm~\ref{alg:alg2} adopts the DCA framework, which decomposes the original non-convex problem~\eqref{27} into a sequence of convex subproblems~\eqref{35} with lower computational complexity. Benefiting from its fast convergence property, the algorithm achieves sufficient computational efficiency for practical applications, and no further design of lower-complexity methods is pursued. The overall computational complexity can be expressed as $O\left(I^{\texttt{DCA}} \cdot C^{\texttt{IP}}\right)$,
where $I^{\texttt{DCA}}$, namely $I$ in Algorithm~\ref{alg:alg2},  denotes the number of DCA iterations and $C_{\texttt{IP}}$ represents the complexity of solving one convex subproblem via the interior-point method. For convex subproblems, the interior-point method typically exhibits polynomial complexity, expressed as
$    C^{\texttt{IP}} = O\left(N^{3.5}\ln\left(\tfrac{1}{\epsilon^{\texttt{IP}}}\right)\right),$
with $\epsilon^{\texttt{IP}}$ denoting the solution accuracy of interior-point method.

\begin{algorithm}
\caption{DCA for problem~\eqref{34}}
\begin{algorithmic}
\STATE Initialize
$k_{n,t}^{0},r_{n,t}^{0},z_{n,t}^{0},u_{n,t}^{0},v_{n,t}^{0}$, set the number of iterations $I$ and the solution tolerance $\epsilon$
    \FOR{$i = 0,\ldots,I$}
        \STATE Compute $\nabla\phi_2\left(r_{n,t}^{i}\right)$ and $\nabla\phi_1\left(k_{n,t}^{i}\right)$
        \STATE Solve problem~\eqref{35} to obtain $k_{n,t}^{i+1},r_{n,t}^{i+1},z_{n,t}^{i+1},u_{n,t}^{i+1},v_{n,t}^{i+1}$
        \IF{$|k_{n,t}^{i+1}-k_{n,t}^{i}|\leq \epsilon^{\texttt{IP}}$ \AND $|r_{n,t}^{i+1}-r_{n,t}^{i}|\leq \epsilon^{\texttt{IP}}$ \AND $|z_{n,t}^{i+1}-z_{n,t}^{i}|\leq \epsilon^{\texttt{IP}}$ \AND $|u_{n,t}^{i+1}-u_{n,t}^{i}|\leq \epsilon^{\texttt{IP}}$ \AND
        $|v_{n,t}^{i+1}-v_{n,t}^{i}|\leq \epsilon^{\texttt{IP}}$
        }
        \RETURN $k_{n,t}^{i},r_{n,t}^{i},z_{n,t}^{i},u_{n,t}^{i},v_{n,t}^{i}$
        \ENDIF
    \ENDFOR
    \RETURN $k_{n,t}^{I},r_{n,t}^{I},z_{n,t}^{I},u_{n,t}^{I},v_{n,t}^{I}$
\end{algorithmic}
\label{alg:alg2}
\end{algorithm}

\section{Extensions and Future Works}

Our framework is built on parameter decoupling, partitioning the model into base  and personalization layers. In this paper we use a fixed partition ratio to isolate the coupling effects between gradient sparsification and model pruning. A key extension is to treat the partition ratio as an adaptive variable that balances knowledge sharing and personalization according to the learning environment.

Intuitively, the optimal partition ratio correlates with data heterogeneity: highly non-IID settings favor larger personalization capacity, in other words, smaller base fraction. We outline a two-stage roadmap for incorporating an adaptive partition ratio \(\rho_t^{\texttt{B}}\): a theoretical modeling stage and a practical adjustment stage.

In the theoretical stage, \(\rho_t^{\texttt{B}}\) can be introduced as an optimization variable via a binary indicator vector \({\mathbf{v}}_{t}^{\texttt{B}}\in\{0,1\}^d\) with \(\rho_t^{\texttt{B}}=\|{\mathbf{v}}_{t}^{\texttt{B}}\|_1/d\). Then \( {\mathbf{w}}_{n,t}^{\texttt{B}}={\mathbf{w}}_{n,t}\odot{\mathbf{v}}_{t}^{\texttt{B}}\) and \( {\mathbf{w}}_{n,t}^{\texttt{P}}={\mathbf{w}}_{n,t}\odot({\mathbf{1}}-{\mathbf{v}}_{t}^{\texttt{B}})\). Introducing \(\rho_t^{\texttt{B}}\) propagates through the analysis: the bounds in Theorem~\ref{theorem:theorem1} become functions of \(\rho_t^{\texttt{B}}\), i.e., \(\Psi_1(r_{n,t},\rho_t^{\texttt{B}})\), \(\Psi_2(k_{n,t},\rho_t^{\texttt{B}})\), and cost models, namely, transmission bits \(S_{n,t}\) and  computation latency \(\tau^{\texttt{Comp}}_{n,t}\), depend on \(d^{\texttt{B}}=\rho_t^{\texttt{B}}d\) and \(d^{\texttt{P}}=(1-\rho_t^{\texttt{B}})d\), leading to an extended joint optimization over \(\{k_{n,t},r_{n,t},\ell_{n,t},\rho_t^{\texttt{B}}\}\). This change propagates into the convergence bounds and cost models, making quantities such as the error bounds and communication/computation costs depend on \(\rho_t^{\texttt{B}}\).

In the adjustment stage, the theoretical reference for \(\rho_t^{\texttt{B}}\) can be refined online via an aggregation-gain heuristic: compare the improvement from aggregation versus purely local updates; increase \(\rho_t^{\texttt{B}}\) when aggregation gain is large, and decrease it when the gain is small. This two-stage approach thus combines upper-bound guidance with real-time performance feedback.

Although complete theoretical treatment and exhaustive experiments are beyond this paper's scope, adaptive partitioning is a promising direction to make the lightweight PFL framework more autonomous for heterogeneous and resource-constrained MEC scenarios. As complementary directions, we also note the value of structured or task-aware pruning, lightweight online scheduling and client selection, and integrating differential privacy or Byzantine-robust mechanisms; these directions require refined theoretical analysis and thorough empirical validation.

\section{Experiments}
\subsection{Experiment Settings}
\subsubsection{Simulation Parameters}
 We simulate a MEC scenario with heterogeneous communication and computation resources. Specifically, the channel gain between the MDs and the BS, the wireless transmission power of the MDs, and the CPU frequency of the MDs differ across various devices and training rounds. The BS coverage radius is 200 m, and the MDs are uniformly distributed in a circular region centered at the BS. The channel gain depends on path loss $128.1+37.6\log_{10}(\delta)$, where $\delta$ (km) is the distance between the MDs and the BS. The CPU frequency is uniformly distributed between 0.5 and 3.0 GHz, and the transmission power is uniformly distributed between 20 and 28 dBm. To ensure that differences in learning rate and batch size do not confound the results, these hyperparameters are fixed at a learning rate of 0.01 and a batch size of 32 throughout all experiments, unless otherwise specified. Some simulation parameters are set as shown in Table~\ref{tab:table2}.

\begin{table}[ht]
\caption{Simulation Parameters}
\centering
\begin{tabular}{c|c}
    \toprule
    Parameters & Values \\
    \midrule
    Path loss models & $128.1+37.6\log_{10}(\delta)$\\
    BS coverage radius & 200 m\\
    MDs CPU frequency & 0.5 $\sim$ 3.0 GHz\\
    Transmission power of BS & 33 dBm\\
    Transmission power of MDs & 20 $\sim$ 28 dBm\\
    AWGN noise power & -174 dBm/Hz\\
    Bandwidth of channel & 10 MHz\\
    Number of MDs, N & 100\\
    Number of training rounds, T &100\\
    Learning rate & 0.01\\
    batch size & 32\\
    \bottomrule
\end{tabular}
\label{tab:table2}
\end{table}

\subsubsection{Datasets and non-IID Partitions}
In our experiments, we use two image classification datasets (MNIST\cite{726791} and CIFAR-10\cite{He_2016_CVPR}), one real-world dataset (Street View House Numbers, SVHN\cite{netzer2011reading}), and one Natural Language Processing dataset (Microsoft Research Paraphrase Corpus, MRPC\cite{dolan2005automatically}).

To simulate statistical heterogeneity, we employ two different data partitioning methods. The first method assigns specific data classes to each client. For example, each client is randomly assigned to two data classes (denoted as Class-2), and the data from each class is then randomly assigned to the clients who own that class, following a uniform distribution. Fewer classes owned by a client indicate a higher degree of non-IID. The second method follows the approach in \cite{NEURIPS2020_18df51b9}, where the dataset is partitioned using a Dirichlet distribution, with the degree of non-IID controlled by the hyperparameter $\alpha$. A smaller value of $\alpha$ indicates a higher degree of non-IID. We set four different non-IID cases: (a) Class-4, (b) Class-2, (c) $\alpha$ = 0.5, and (d) $\alpha$ = 0.1. Each client has a test set, and the classes and distribution of the test set are the same as those of the local training dataset. Model evaluation is performed individually on each client because each client trains a different personalized model. The global test accuracy is calculated as the weighted average of the local test accuracies of each client, with the weights determined by $\gamma_n$. The global test accuracy is abbreviated as test accuracy.

\subsubsection{Baselines}
To evaluate the performance of our algorithm (FLPDSP-OPT), we chose six algorithms as baselines: FedAvg\cite{pmlr-v54-mcmahan17a}, FedPe\cite{arivazhagan2019federated}, LG-FedAvg\cite{liang2020think}, FedMask\cite{10.1145/3485730.3485929}, and two variants of FedAvg with global sparsification (FedAvg-S) and global pruning (FedAvg-P). FedPer and LG-FedAvg are two parameter decoupling algorithms: the former treats shallow layers as the base layers and deep layers as personalization layers, while the latter does the opposite. Neither of the two algorithms applies any sparsification or pruning methods. FedMask is a sparsification method that freezes model weights, applies masks to the deeper layers of the model, and trains the masks rather than the model itself in FL. The sparsification rate of FedAvg-S is $0.05$, and the pruning rate of FedAvg-P is $0.5$.

\subsubsection{Model Architectures} 
In our experiments, we employ three neural network models: LeNet-5 for MNIST, ResNet-18 for CIFAR-10 and SVHN, and BERT-Base with LoRA for MRPC.
The paper \cite{arivazhagan2019federated} points out that the reasonable decoupling of the model depends not only on the structure of the neural network but also on the specific training tasks and datasets. Taking the performance of the LeNet-5 for MNIST as an example, as shown in Table~\ref{tab:lenet5_results}, the proportion of base layers and personalization layers that achieves the optimal accuracy varies under different degrees of non-IID. When the degree of non-IID increases (i.e., when the hyperparameter $\alpha$ decreases), the proportion of base layers corresponding to the optimal accuracy becomes smaller, while the proportion of personalization layers becomes larger. To reasonably control variables and better investigate the impact of sparsification rates, pruning rates, and bandwidth allocation on the time efficiency and accuracy of FL systems under heterogeneous data and device conditions, a fixed model partitioning is adopted in this paper. The dynamic partitioning of base and personalization layers will be left as an important direction for future work.
\begin{table*}[htbp]
\centering
\caption{Rate of base and personalization layers (LeNet-5 for MNIST)}
\label{tab:lenet5_results}
\begin{tabular}{c|c|c|c|c|c|c}
\hline
Degree of non-IID ($\alpha$) & Base & Pers. & Base (\%) & Pers. (\%) & Time (s) & Acc \\
\hline
0.1 & Conv1 & Conv2 Conv3 FC1 FC2 & 0.25\% & 99.75\% & 2784 & 0.9938 \\
0.3 & Conv1 Conv2 & Conv3 FC1 FC2 & 4.17\% & 95.83\% & 4223 & 0.9891 \\
0.5 & Conv1 Conv2 Conv3 & FC1 FC2 & 82.15\% & 17.85\% & 4918 & 0.9859 \\
0.7 & Conv1 Conv2 Conv3 & FC1 FC2 & 82.15\% & 17.85\% & 5541 & 0.9826 \\
1.0 & Conv1 Conv2 Conv3 FC1 & FC2 & 98.62\% & 1.38\% & 5866 & 0.9785 \\
2.0 & Conv1 Conv2 Conv3 FC1 & FC2 & 98.62\% & 1.38\% & 5979 & 0.9668 \\
\hline
\end{tabular}
\end{table*}
Through experiments, we obtained reasonable settings for the above training tasks and datasets. For LeNet-5, convolutional layers and max-pooling layers are used as base layers, while fully connected layers are used as personalization layers. For ResNet-18 and SVHN, the last BasicBlock, along with the subsequent average pooling layers and fully connected layers, are used as personalization layers, with the rest layers as base layers. For BERT-Base with LoRA, the last four layers of the LoRA low-rank matrices together with the downstream classification layer are used as personalization layers, while the remaining layers are used as base layers.

\subsection{Convergence and Accuracy}
\subsubsection{Image Classification}
Firstly, the convergence and accuracy of these algorithms are compared on classical image datasets. Fig.~\ref{fig_1} to Fig.~\ref{fig_4} show these algorithms' performance regarding the loss function values and test accuracy over total training latency across different non-IID cases and datasets. On MNIST, the proposed algorithm consistently achieves the best time efficiency and test accuracy, and its advantage becomes more pronounced as the degree of non-IID increases. Although FedMask converges relatively fast, its accuracy is lower, with the gap being especially evident when the non-IID degree is small. FedPer and LG-FedAvg achieve final  the accuracy comparable to the proposed algorithm but require longer latency to converge, resulting in inferior time efficiency. Applying global sparsification to FedAvg (FedAvg-S) brings some improvements, whereas global pruning (FedAvg-P) performs even worse than FedAvg. Similar trends can be observed on CIFAR-10, the proposed algorithm still maintains the fastest convergence and highest test accuracy, while FedAvg and its global sparsification or pruning variants show no improvement. Overall, the proposed algorithm achieves the best time efficiency and test accuracy, as personalized models can better handle complicated non-IID datasets, while the jointly optimized sparsification and pruning rates minimize their impact on the convergence constant factor of the algorithm.
\begin{figure}[htbp]
\centering
\subfloat[]{
		\includegraphics[scale=0.5]{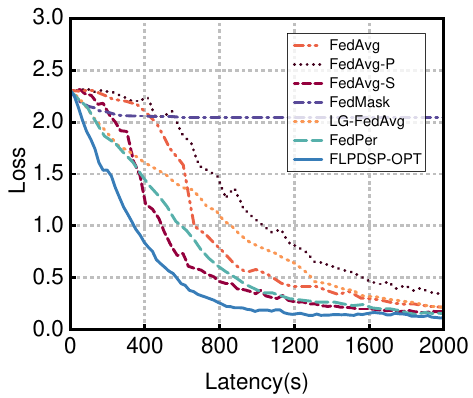}}
\subfloat[]{
		\includegraphics[scale=0.5]{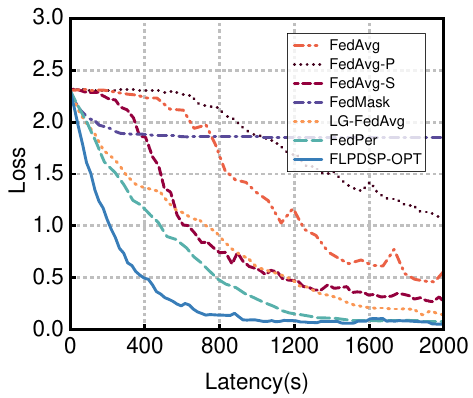}}
\\
\vspace{-10pt}
\subfloat[]{
		\includegraphics[scale=0.5]{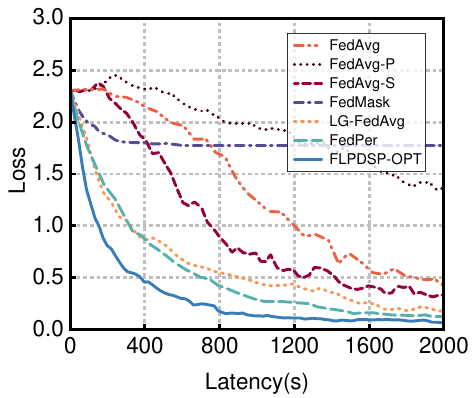}}
\subfloat[]{
		\includegraphics[scale=0.5]{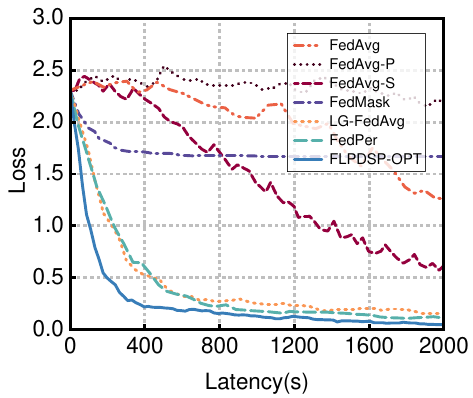}}
\caption{Loss vs. Latency on MNIST for different non-IID cases. (a) Class-4. (b) Class-2. (c) $\alpha$ = 0.5. (d) $\alpha$ = 0.1.}
\label{fig_1}
\end{figure}

\begin{figure}[htbp]
\centering
\subfloat[]{
		\includegraphics[scale=0.5]{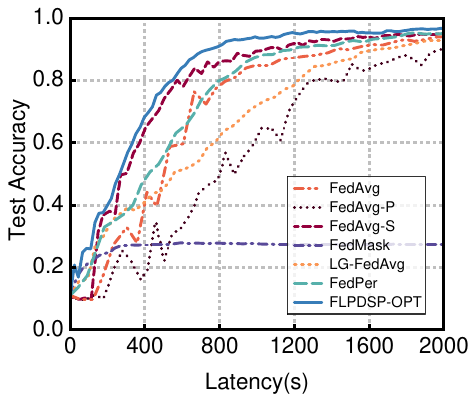}}
\subfloat[]{
		\includegraphics[scale=0.5]{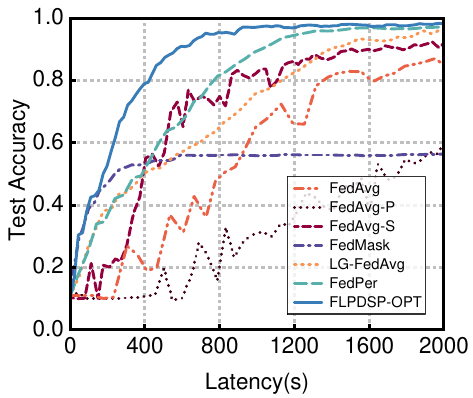}}
\\
\vspace{-10pt}
\subfloat[]{
		\includegraphics[scale=0.5]{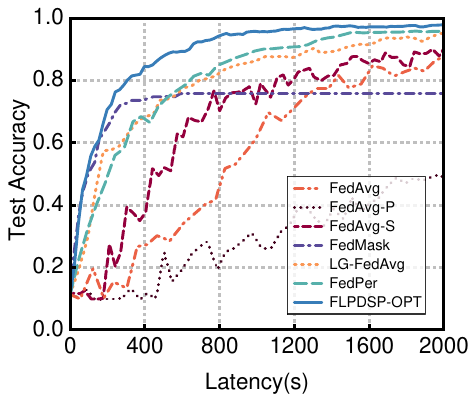}}
\subfloat[]{
		\includegraphics[scale=0.5]{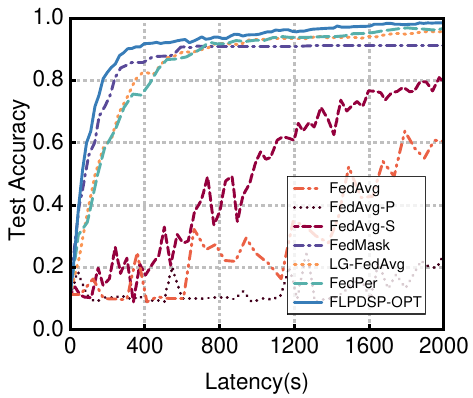}}
\caption{Test Accuracy vs. Latency on MNIST for different non-IID cases. (a) Class-4. (b) Class-2. (c) $\alpha$ = 0.5. (d) $\alpha$ = 0.1.}
\label{fig_2}
\end{figure}

\begin{figure}[htbp]
\centering
\subfloat[]{
		\includegraphics[scale=0.5]{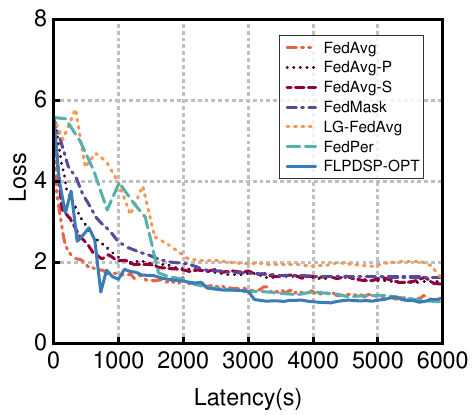}}
\subfloat[]{
		\includegraphics[scale=0.5]{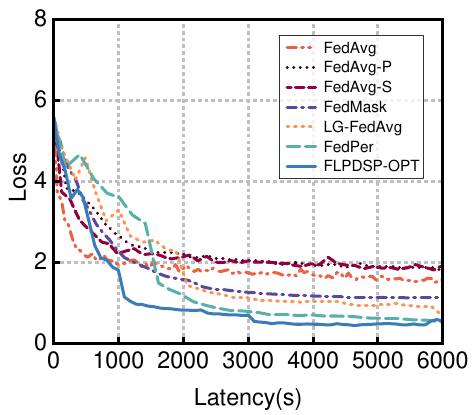}}
\\
\vspace{-10pt}
\subfloat[]{
		\includegraphics[scale=0.5]{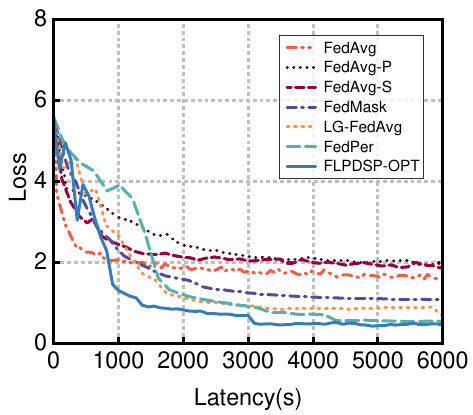}}
\subfloat[]{
		\includegraphics[scale=0.5]{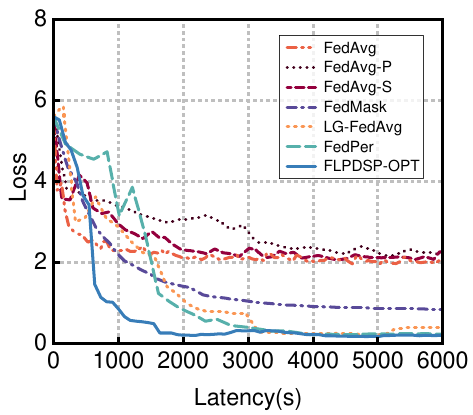}}
\caption{Loss vs. Latency on CIFAR-10 for different non-IID cases. (a) Class-4. (b) Class-2. (c) $\alpha$ = 0.5. (d) $\alpha$ = 0.1.}
\label{fig_3}
\end{figure}

\begin{figure}[htbp]
\centering
\subfloat[]{
		\includegraphics[scale=0.5]{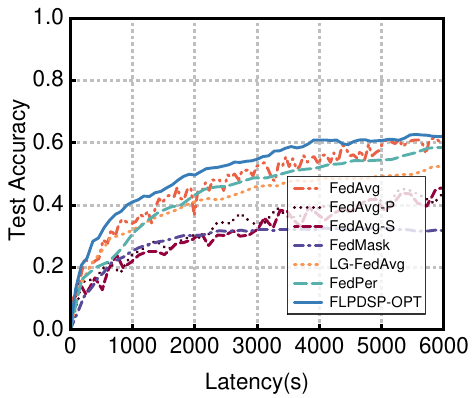}}
\subfloat[]{
		\includegraphics[scale=0.5]{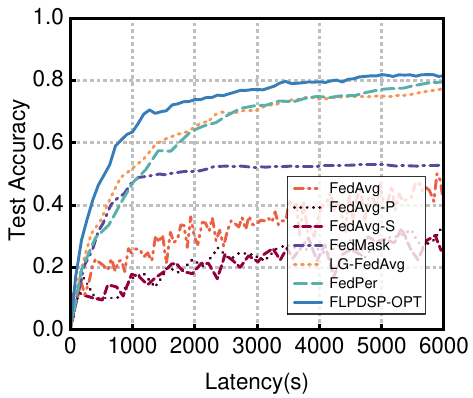}}
\\
\vspace{-10pt}
\subfloat[]{
		\includegraphics[scale=0.5]{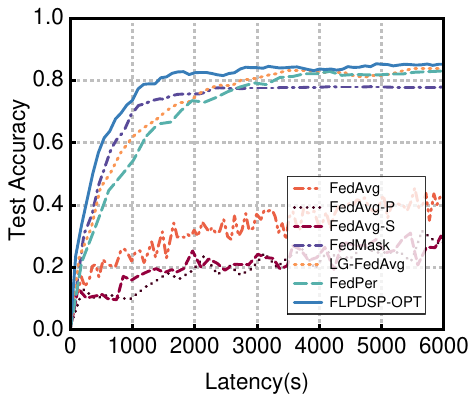}}
\subfloat[]{
		\includegraphics[scale=0.5]{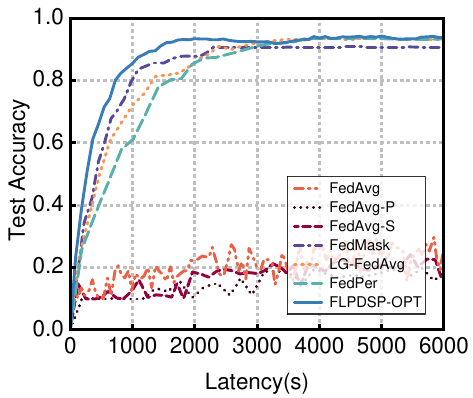}}
\caption{Test Accuracy vs. Latency on CIFAR-10 for different non-IID cases. (a) Class-4. (b) Class-2. (c) $\alpha$ = 0.5. (d) $\alpha$ = 0.1.}
\label{fig_4}
\end{figure}

\subsubsection{Natural Language Processing}
In further experiments, the algorithms was applied to natural language processing tasks. Fig.~\ref{fig_lora_loss} to Fig.~\ref{fig_lora_F1} present the variations of loss, accuracy, and F1 score of the algorithms over training latency. Since the MRPC dataset is a binary classification task, class-based non-IID partitioning was not adopted. The results show that in this more complex task, some personalized methods that performed well on simpler datasets (such as LG-FedAvg and FedMask) fail to maintain their advantages, with final test accuracy even falling below that of FedAvg. By contrast, our algorithm achieves faster convergence while maintaining high accuracy, thereby exhibiting a significant advantage in time efficiency.
\begin{figure}[htbp]
\centering
\subfloat[]{
		\includegraphics[scale=0.5]{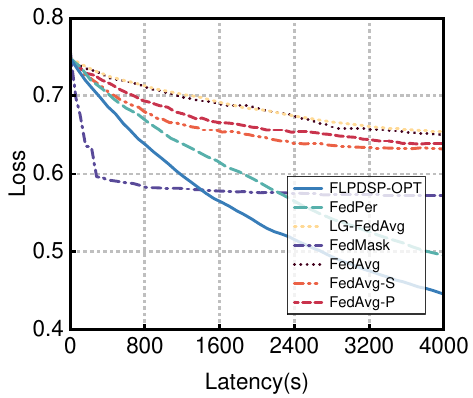}}
\subfloat[]{
		\includegraphics[scale=0.5]{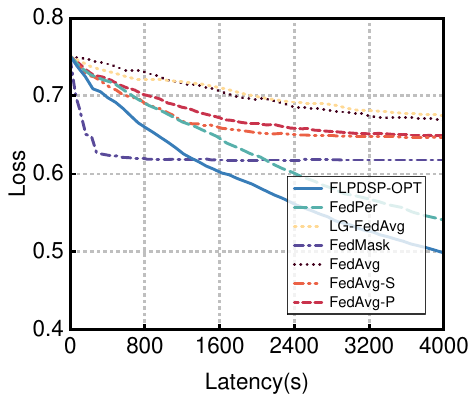}}
\caption{Loss vs. Latency on MRPC for BERT-Base with LoRA (a) $\alpha=1.0$ (b) $\alpha=2.0$}
\label{fig_lora_loss}
\end{figure}

\begin{figure}[htbp]
\centering
\subfloat[]{
		\includegraphics[scale=0.5]{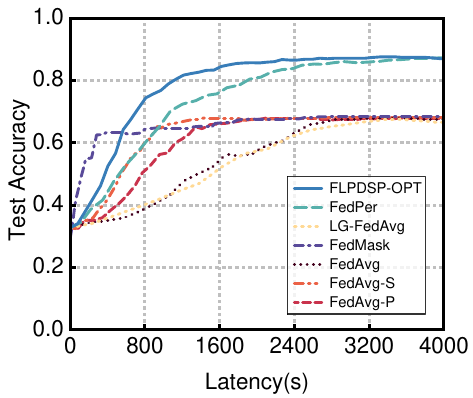}}
\subfloat[]{
		\includegraphics[scale=0.5]{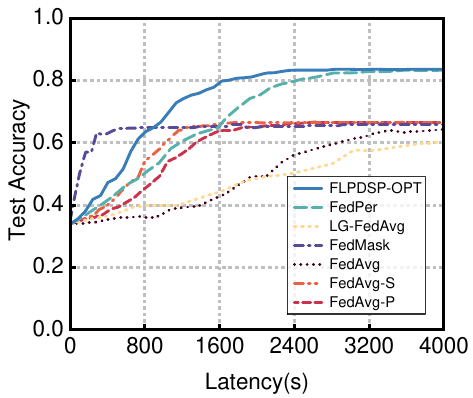}}
\caption{Test Accuracy vs. Latency on MRPC for BERT-Base with LoRA (a) $\alpha=1.0$ (b) $\alpha=2.0$}
\label{fig_lora_acc}
\end{figure}

\begin{figure}[htbp]
\centering
\subfloat[]{
		\includegraphics[scale=0.5]{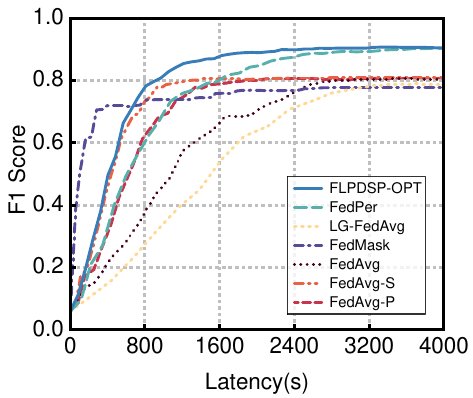}}
\subfloat[]{
		\includegraphics[scale=0.5]{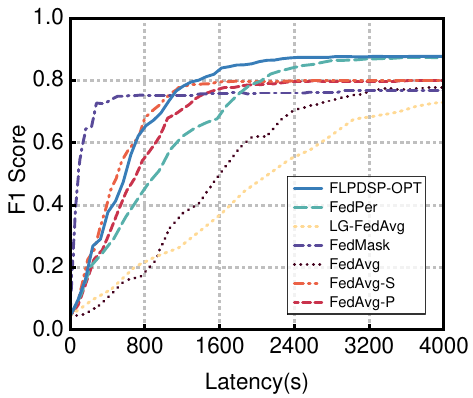}}
\caption{F1 Score vs. Latency on MRPC for BERT-Base with LoRA (a) $\alpha=1.0$ (b) $\alpha=2.0$}
\label{fig_lora_F1}
\end{figure}

\subsection{Communication and Computation Costs}
Table~\ref{tab:my-table} and Table~\ref{tab:my-table2} provide the overall communication and computation costs for all devices when FL training reaches a certain target accuracy (denoted as ToA@X) for different non-IID cases and datasets. It includes the total communication and computation latency, the total energy consumption, the total number of bits in uplink transmission of the sparse gradients of the base layers, and the total number of floating-point operations (FLOPs)\footnote{The FLOPs per CPU cycle depend on various factors, including the CPU's micro-architecture and instruction set. A higher value indicates greater computational power of the CPU. Since the FLOPs per cycle do not affect the conclusions of this paper, we set the FLOPs per cycle to a conservative value of 2\cite{dolbeau2018theoretical}.} for local computation of the base and personalization layers. Compared to the baselines, our algorithm reduces the total latency by around 30\% to 50\% when reaching the target accuracy, and also shows significant improvements in other costs. Moreover, as the degree of non-IID increases, the advantages of our algorithm in terms of the costs become more evident. Therefore, our algorithm not only achieves good performance on non-IID data but also effectively reduces the overall communication and computation costs of FL training.
\begin{table*}[htbp]
\centering
  \caption{communication and computation costs for MNIST}
    \begin{tabular}{lcrrrr|rrrr}
    \toprule
          &       & \multicolumn{1}{l}{Class-4} & \multicolumn{1}{l}{Class-2} & \multicolumn{1}{l}{$\alpha=0.5$} & \multicolumn{1}{l}{$\alpha=0.1$} & \multicolumn{1}{l}{Class-4} & \multicolumn{1}{l}{Class-2} & \multicolumn{1}{l}{$\alpha=0.5$} & \multicolumn{1}{l}{$\alpha=0.1$} \\
          & \multicolumn{1}{l}{ToA@X} & \multicolumn{4}{c}{Latency (s)} & \multicolumn{4}{c}{Energy (J)} \\
    \midrule
    FedAvg & \multirow{4}[2]{*}{80\%} & 894  & 1682  & 1447  & 2701  & 1389  & 2554  & 2177  & 4262 \\
    LG-FedAvg &       & 1197  & 1026  & 736   & 446   & 1945  & 1666  & 1233  & 773 \\
    FedPer &       & 912   & 867   & 713   & 475   & 1343 & 1269  & 1071   & 734 \\
    FLPDSP-OPT &       & \textbf{484} & \textbf{418} & \textbf{308} & \textbf{223} & \textbf{1336}  & \textbf{1173} & \textbf{871} & \textbf{645} \\
    \midrule
    FedAvg & \multirow{4}[2]{*}{90\%} & 1392  & 2594  & 2158  & 4260  & 2094  & 4067  & 3288  & 6676 \\
    LG-FedAvg &       & 1665  & 1456  & 1324  & 821   & 2684  & 2342  & 2146  & 1341 \\
    FedPer &       & 1243  & 1145  & 1009  & 955   & \textbf{1823} & 1687  & 1461  & 1403 \\
    FLPDSP-OPT &       & \textbf{770} & \textbf{572} & \textbf{506} & \textbf{441} & 2117  & \textbf{1559} & \textbf{1391} & \textbf{1215} \\
    \midrule
          & \multicolumn{1}{l}{ToA@X} & \multicolumn{4}{c}{Transmission bits ($\times10^9$)} & \multicolumn{4}{c}{TFLOPs} \\
    \midrule
    FedAvg & \multirow{4}[2]{*}{80\%} & 0.71  & 1.32  & 1.12  & 2.15  & \textbf{12.54} & 23.41  & 20.06 & 38.11 \\
    LG-FedAvg &       & 1.03  & 0.89  & 0.64  & 0.39  & 24.22 & 20.88 & 15.14  & 11.18 \\
    FedPer &       & 0.88  & 0.84  & 0.72  & 0.46  & 15.88  & \textbf{15.04} & 12.55  & 8.34 \\
    FLPDSP-OPT &       & \textbf{0.68} & \textbf{0.60} & \textbf{0.44} & \textbf{0.32} & 17.82  & 15.42 & \textbf{11.37} & \textbf{8.02} \\
    \midrule
    FedAvg & \multirow{4}[2]{*}{90\%} & 1.17  & 2.05  & 1.68  & 3.36  & \textbf{19.22} & 36.76 & 30.08 & 60.16 \\
    LG-FedAvg &       & 1.42  & 1.25  & 1.14  & 0.71  & 33.42 & 29.24 & 26.54 & 20.72 \\
    FedPer &       & 1.21  & 1.12  & 0.98  & 0.93  & 21.72 & \textbf{20.06} & 21.54 & 18.72 \\
    FLPDSP-OPT &       & \textbf{1.10} & \textbf{0.83} & \textbf{0.71} & \textbf{0.62} & 28.34 & 21.02 & \textbf{18.64} & \textbf{16.22} \\
    \bottomrule
    \end{tabular}%
  \label{tab:my-table}%
\end{table*}%

\subsection{Real-World MEC Environments}
To further evaluate the effectiveness and robustness of the proposed algorithm in Real-World MEC Environments, we conduct experiments on the SVHN dataset under different bandwidth conditions and random bandwidth fluctuations. The SVHN dataset, derived from real-world house number recognition, better reflects practical MEC applications such as smart logistics, where drones or autonomous vehicles must recognize house numbers for accurate delivery.

As shown in Table~\ref{tab:average_k_r} and Fig.~\ref{fig_5}, when bandwidth is small, communication becomes the main bottleneck, sparsification significantly improves time efficiency, while pruning shows limited effect. When bandwidth is larger, pruning becomes more beneficial by reducing local computation costs, while sparsification plays a diminished role. This highlights the coupling among sparsification, pruning, and bandwidth allocation. Compared with fixed-ratio settings, joint optimization adaptively balances sparsification and pruning according to bandwidth conditions and device heterogeneity, while flexibly allocating communication resources.

In scenarios with random bandwidth fluctuations, the algorithm also adapts sparsification, pruning, and bandwidth allocation to maintain high time efficiency and test accuracy. These results demonstrate the necessity and superiority of the joint optimization, as well as its robustness in complex MEC environments.
\begin{table*}[htbp]
  \centering
  \caption{communication and computation costs for CIFAR-10}
    \begin{tabular}{cccc|cccc|ccc}
    \toprule
          &       & \multicolumn{2}{c|}{Class-4} &       & Class-2 & $\alpha=0.5$ & \multicolumn{1}{c}{$\alpha=0.1$} & Class-2 & $\alpha=0.5$ & $\alpha=0.1$ \\
          & ToA@X & Latency (s) & Energy (J) & ToA@X & \multicolumn{3}{c}{Latency (s)} & \multicolumn{3}{c}{Energy (J)} \\
    \midrule
    LG-FedAvg & \multirow{3}[2]{*}{60\%} & 12091 & 19919 & \multirow{3}[2]{*}{70\%} & 2959  & 1737  & 1036  & 4958  & 2865  & 1809 \\
    FedPer &       & 7787  & 13198 &       & 2979  & 1797  & 1391  & 4850  & 2964  & 2309 \\
    FLPDSP-OPT &       & \textbf{3912} & \textbf{10926} &       & \textbf{1546} & \textbf{819} & \textbf{546} & \textbf{4396} & \textbf{2312} & \textbf{1647} \\
    \midrule
    LG-FedAvg & \multirow{3}[2]{*}{70\%} & 25011 & 41884 & \multirow{3}[2]{*}{80\%} & 10682 & 3512  & 1379  & 17507 & 5795  & 2329 \\
    FedPer &       & 20141 & \textbf{34244} &       & 7601  & 3394  & 1797  & \textbf{12826} & 5518  & 2964 \\
    FLPDSP-OPT &       & \textbf{14188} & 40489 &       & \textbf{5004} & \textbf{1546} & \textbf{728} & 14122 & \textbf{4396} & \textbf{2094} \\
    \midrule
          & ToA@X & Transmission bits ($\times10^9$) & TFLOPs & ToA@X & \multicolumn{3}{c}{Transmission bits ($\times10^9$)} & \multicolumn{3}{c}{TFLOPs} \\
    \midrule
    LG-FedAvg & \multirow{3}[2]{*}{60\%} & 245.65 & 227.38 & \multirow{3}[2]{*}{70\%} & 61.42  & 36.11 & 21.67 & 56.84 & 33.42 & 25.06 \\
    FedPer &       & 196.03 & \textbf{133.72} &       & 73.53  & 44.10  & 34.31 & \textbf{50.14} & 31.18 & 23.42 \\
    FLPDSP-OPT &       & \textbf{134.89} & 140.14 &       & \textbf{54.91} & \textbf{28.51} & \textbf{19.61} & 55.62 & \textbf{29.62} & \textbf{19.82} \\
    \midrule
    LG-FedAvg & \multirow{3}[2]{*}{70\%} & 512.88 & 474.72 & \multirow{3}[2]{*}{80\%} & 216.71 & 72.23 & 28.89 & 201.58 & 66.87 & 30.74 \\
    FedPer &       & 504.72 & \textbf{344.32} &       & 191.11 & 83.32 & 44.11 & \textbf{130.38} & 56.85 & 29.88 \\
    FLPDSP-OPT &       & \textbf{487.61} & 508.06 &       & \textbf{173.37} & \textbf{54.91} & \textbf{26.03} & 178.86 & \textbf{54.65} & \textbf{26.36} \\
    \bottomrule
    \end{tabular}%
\label{tab:my-table2}%
\end{table*}%

\begin{table}[htbp]
\centering
\caption{average sparsification and pruning rate under different bandwidth}
\begin{tabular}{lcc}
\hline
\textbf{Bandwidth} & $\bar{k}$ & $\bar{r}$ \\
\hline
10 MHz              & 0.74 & 0.82 \\
30 MHz              & 0.86 & 0.91 \\
50 MHz              & 0.98 & 0.96 \\
10$\sim$50 MHz      & 0.78 & 0.97 \\
20$\sim$40 MHz      & 0.66 & 0.94 \\
\hline
\end{tabular}
\label{tab:average_k_r}
\end{table}

\begin{figure}[htbp]
\centering
\subfloat[]{
		\includegraphics[scale=0.5]{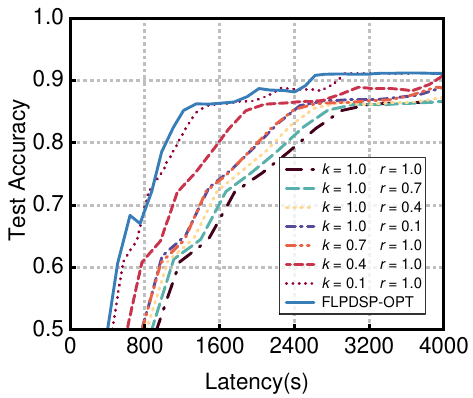}}
\subfloat[]{
		\includegraphics[scale=0.5]{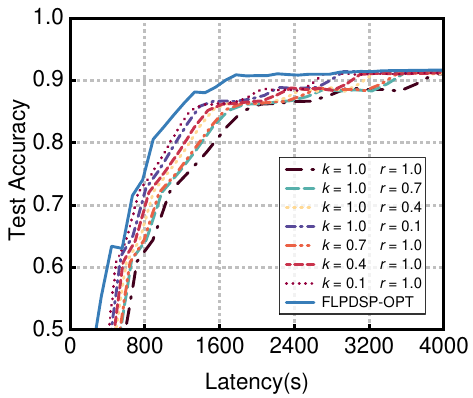}}
\\
\vspace{-10pt}
\subfloat[]{
		\includegraphics[scale=0.5]{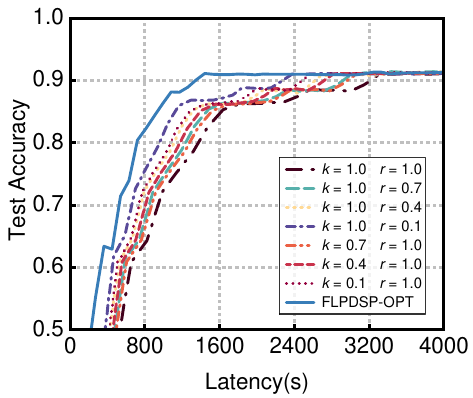}}
\subfloat[]{
		\includegraphics[scale=0.5]{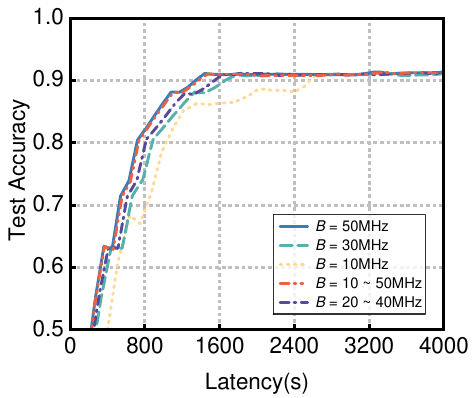}}
\caption{Test Accuracy vs. Latency on SVHN for different bandwidth cases. (a) bandwidth = 10 MHz  (b) bandwidth = 30 MHz (c) bandwidth = 50 MHz (d) dynamic bandwidth}
\label{fig_5}
\end{figure}

\subsection{Ablation Studies}
Fig.~\ref{fig_sp} analyzes the contributions of different module combinations (parameter decoupling, sparsification, and pruning) to the performance improvement of the algorithm. Specifically, FedAvg-P represents pruning only, FedAvg-S represents sparsification only, and FLPDSP ($k=1, r=1$) represents parameter decoupling only. Setting different values of $k$ and $r$ for FLPDSP corresponds to applying sparsification to the base layers, pruning to the personalization layers, or both simultaneously on top of parameter decoupling.

From Fig.~\ref{fig_sp} (a), it can be seen that when only one module is enabled, parameter decoupling contributes the most to improving test accuracy, sparsification provides moderate improvement, while pruning actually reduces test accuracy.

From Fig.~\ref{fig_sp} (b), for the combination of sparsification and parameter decoupling, decreasing the sparsification rate $k$ accelerates convergence, but the final test accuracy declines.

From Fig.~\ref{fig_sp} (c), for the combination of pruning and parameter decoupling, decreasing the pruning rate $r$ accelerates convergence in the early stage of training (before about 900s). However, as training progresses, the accuracy degradation caused by pruning becomes evident, and the smaller the pruning rate $r$, the more pronounced the decline in test accuracy.

From Fig.~\ref{fig_sp} (d), when parameter decoupling, pruning, and sparsification are combined simultaneously, performance improves further. In the early stage of training (before about 400s), both time efficiency and test accuracy even surpass joint optimization, but in the later stage (after about 400s), they fall below the results of joint optimization.

In summary, using only one of the three modules, or combining parameter decoupling with sparsification, or parameter decoupling with pruning, or even combining all three with fixed sparsification and pruning rates, cannot achieve faster convergence or higher test accuracy than joint optimization. This demonstrates that the optimization problem proposed in this work is both necessary and effective.

\begin{figure}[htbp]
\centering
\subfloat[]{
		\includegraphics[scale=0.5]{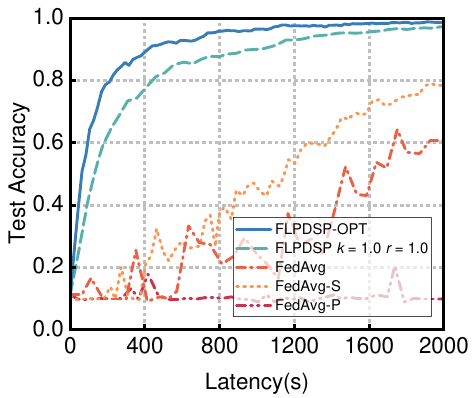}}
\subfloat[]{
		\includegraphics[scale=0.5]{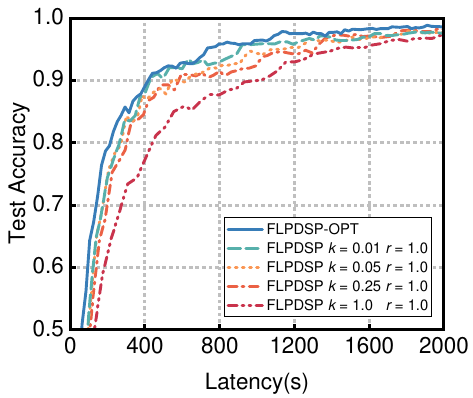}}
\\
\vspace{-10pt}
\subfloat[]{
		\includegraphics[scale=0.5]{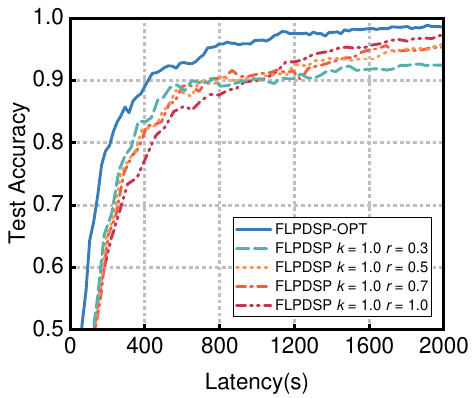}}
\subfloat[]{
		\includegraphics[scale=0.5]{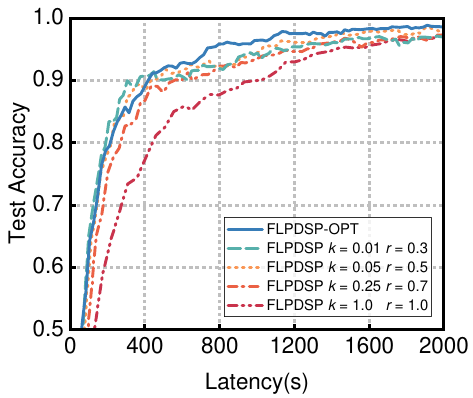}}
\caption{Test Accuracy vs. Latency on MNIST ($\alpha=0.1$). (a) only pruning, sparsification or parameter decoupling  (b) sparsification and parameter decoupling (c) pruning and parameter decoupling (d) pruning, sparsification and parameter decoupling}
\label{fig_sp}
\end{figure}
% Fig.~\ref{fig_5} analyzes the effect of different sparsification and pruning rates, using the MNIST dataset with $\alpha$ = 0.1. We fix the pruning rate and sparsification rate to 1 as the baseline. From Fig.~\ref{fig_5} (a), when the pruning rate is fixed at 1.0, which represents no pruning of the personalization layers, setting different sparsification rates makes the algorithm converge faster compared to the baseline, but never surpasses the results of joint optimization at any time. From Fig.~\ref{fig_5} (b), when the sparsification rate is fixed at 1.0, which represents no sparsification of the base layer gradients, setting different pruning rates also makes the algorithm converge faster in the early stages of training (before around 900s). But in the later stages (after around 900s), the more the pruning rate decreases, the more the test accuracy decreases. From Fig.~\ref{fig_5} (c), the algorithm of combining different sparsification rates and pruning rates converges faster than the results of joint optimization in the early stages (before around 400s), but it falls below the results of joint optimization in the later stages (after around 400s). This indicates that sparsifying the base layer gradients or pruning on the personalization layer models individually, or using fixed sparsification and pruning rates without optimizing for device heterogeneity, will not converge faster or yield a better test accuracy than joint optimization. This demonstrates that the optimization problem we proposed is both necessary and effective.

\section{Conclusions}
In this paper, we have introduced a novel lightweight PFL training approach that reduces communication and computation costs by addressing client heterogeneity based on parameter decoupling with gradient sparsification and model pruning. Furthermore, we have analyzed the convergence of our proposed algorithm, which lays the foundation for further optimization. To this end, a joint optimization problem has been formulated based on the constant factor of convergence, device heterogeneity, and limited MEC system resources. Moreover, the experiments demonstrate the superiority of our approach in improving the algorithm's time efficiency and reducing overall communication and computation costs, especially training latency, particularly in cases with high degrees of data heterogeneity. This advancement is expected to promote the application of FL in real-world MEC environments.

\begin{comment}
\section*{Acknowledgments}
This should be a simple paragraph before the References to thank those individuals and institutions who have supported your work on this article.
\end{comment}

\bibliographystyle{IEEEtran}
\bibliography{referrences}

\begin{IEEEbiography}
[{\includegraphics[width=1in,height=1.25in,keepaspectratio]{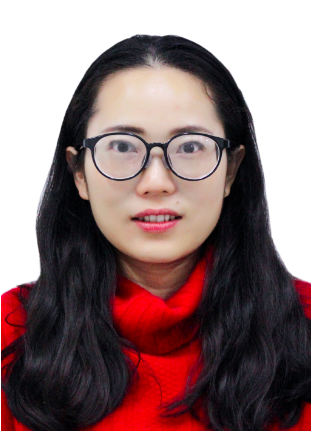}}]
{Jinghong Tan} received the B.E. degree in Communication Engineering from Shandong University, Jinan, China, in 2014, and the Ph.D. degree in Electronic Engineering from the Singapore University of Technology and Design (SUTD), Singapore, in 2019. She is currently a Lecturer with the School of Software, Yunnan University, Kunming, China. Her research interests include wireless edge computing, edge intelligence, and incentive mechanisms for edge networks.
\end{IEEEbiography}

 \begin{IEEEbiography}
  [{\includegraphics[width=1in,height=1.25in, keepaspectratio]{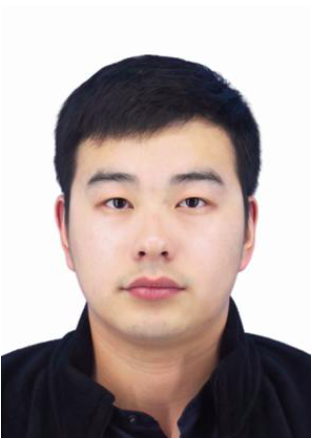}}]
  {Zhichen Zhang} received his M.E. degree in software engineering from Yunnan University, Kunming, China, in 2025. His research focuses on federated learning.
  \end{IEEEbiography}

 \begin{IEEEbiography}
[{\includegraphics[width=1in,height=1.25in,keepaspectratio]{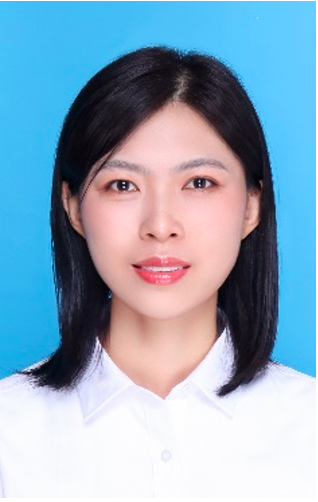}}]
{Kun Guo} (Member, IEEE) received the B.E. degree in Telecommunications Engineering from Xidian University, Xi'an, China, in 2012, where she received the Ph.D. degree in communication and information systems in 2019. From 2019 to 2021, she was a Post-Doctoral Research Fellow with the Singapore University of Technology and Design (SUTD), Singapore. Currently, she is a Research Professor with the School of Communications and Electronics Engineering at East China Normal University, Shanghai, China. Her research interests include wireless edge computing and intelligence, as well as non-terrestrial networks.

\end{IEEEbiography}

 \begin{IEEEbiography}[{\includegraphics[width=1in,height=1.25in,keepaspectratio]{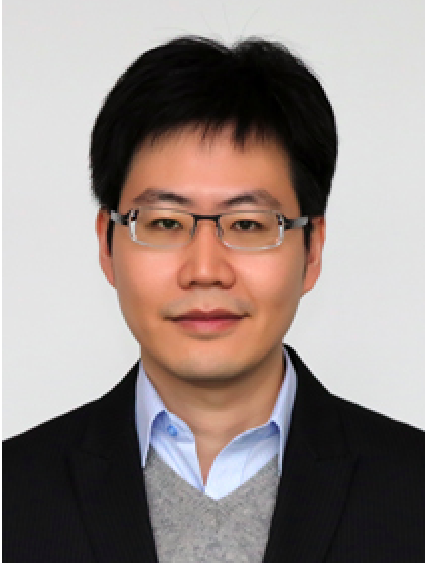}}]
{Tsung-Hui Chang} (Fellow, IEEE) received the B.S. degree in electrical engineering and the Ph.D. degree in communications engineering from the National Tsing Hua University (NTHU), Hsinchu, Taiwan, in 2003 and 2008, respectively. Currently, he is a Professor and the Associate Dean (Education) of the School of Artificial Intelligence, The Chinese University of Hong Kong, Shenzhen (CUHK-SZ), China, and the Associate Director of Guangdong Provincial Key Laboratory of Big Data Computing. Before joining CUHK-SZ, he was with the National Taiwan University of Science and Technology (NTUST), the University of California, Davis, as a Postdoctoral Researcher and a Faculty Member, respectively. His research interests include signal processing and optimization problems in data communications and machine learning. He has been an Elected Member of the IEEE Signal Processing Society (SPS) Signal Processing for Communications and Networking Technical Committee (SPCOM TC) since 2020. He received the Young Scholar Research Award of NTUST in 2014, the IEEE ComSoc Asian-Pacific Outstanding Young Researcher Award in 2015, the IEEE SPS Best Paper Awards in 2018 and 2021, the Outstanding Faculty Research Award of SSE of CUHK-SZ in 2021, and the Outstanding Research Award of CUHK-SZ in 2024. He is the Founding Chair of the IEEE SPS Integrated Sensing and Communication Technical Working Group (ISAC TWG) and the elected Regional Director-at-Large of Board of Governors of IEEE SPS from 2022 to 2023. He has served on the editorial board for major SP journals, including an Associate Editor for IEEE Transactions on Signal Processing from 2014 to 2018, IEEE Transactions on Signal and Information Processing Over Networks from 2015 to 2018, IEEE Open Journal of Signal Processing since 2020, and a Senior Area Editor for IEEE Transactions on Signal Processing from 2021 to 2025.

\end{IEEEbiography}

\begin{IEEEbiography}[{\includegraphics[width=1in,height=1.25in,keepaspectratio]{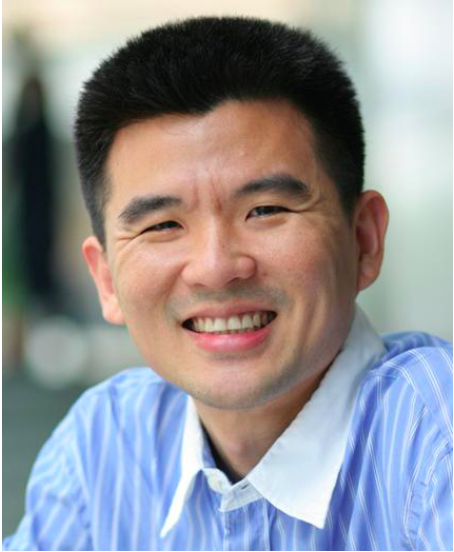}}]
{Tony Q.S. Quek}(Fellow, IEEE) received the B.E.\ and M.E.\ degrees in electrical and electronics engineering from the Tokyo Institute of Technology in 1998 and 2000, respectively, and the Ph.D.\ degree in electrical engineering and computer science from the Massachusetts Institute of Technology in 2008. Currently, he is the Associate Provost (AI \& Digital Innovation) and Cheng Tsang Man Chair Professor with Singapore University of Technology and Design (SUTD). He also serves as the Director of the Future Communications R\&D Programme, and the ST Engineering Distinguished Professor. He is a co-founder of Silence Laboratories and NeuroRAN. His current research topics include wireless communications and networking, network intelligence, non-terrestrial networks, open radio access network, AI-RAN, and 6G.

Dr.\ Quek was honored with the 2008 Philip Yeo Prize for Outstanding Achievement in Research, the 2012 IEEE William R. Bennett Prize, the 2015 SUTD Outstanding Education Awards -- Excellence in Research, the 2016 IEEE Signal Processing Society Young Author Best Paper Award, the 2017 CTTC Early Achievement Award, the 2017 IEEE ComSoc AP Outstanding Paper Award, the 2020 IEEE Communications Society Young Author Best Paper Award, the 2020 IEEE Stephen O. Rice Prize, the 2020 Nokia Visiting Professor, the 2022 IEEE Signal Processing Society Best Paper Award, the 2024 IIT Bombay International Award For Excellence in Research in Engineering and Technology, the IEEE Communications Society WTC Recognition Award 2024, and the Public Administration Medal (Bronze). He is an IEEE Fellow, a WWRF Fellow, an AIAA Fellow, and a Fellow of the Academy of Engineering Singapore.
\end{IEEEbiography}

\end{document}